%
%

\documentclass[11pt]{article}
\usepackage[hmargin=1.1in,vmargin=1.3in]{geometry}

\title{
Sampling in a Quantum Population, and Applications \\[1ex]
}

\author{
Niek J. Bouman and Serge Fehr \\[1ex]
\small\it Centrum Wiskunde \& Informatica (CWI), Amsterdam, The Netherlands \\
\small\tt \{n.j.bouman,s.fehr\}@cwi.nl
}

\date{}

\usepackage{bbm}
\usepackage{amssymb}
\usepackage{amsmath}
\usepackage{amsfonts}
\usepackage{amsthm}
\usepackage{xspace}

\usepackage{comment}
\usepackage{booktabs}
\usepackage{graphicx}
\usepackage{ifpdf}
\ifpdf
\usepackage[usenames,dvipsnames]{color}
\else
\usepackage[usenames,dvips]{color}
\fi

\usepackage{verbatim}
\usepackage{ifthen}
\usepackage{hyperref}

\usepackage{float}
\floatstyle{ruled}
\newfloat{protocol}{thp}{lop}
\floatname{protocol}{Protocol}
\newcommand{\protoline}[2]{\item\emph{(#1)} #2\vspace{-0.5em}}

\definecolor{RemarkRed}{rgb}{0.8,0.0,0.0}


\ifpdf

\else

\fi


\renewcommand{\Pr}{\ensuremath{\mathrm{Pr}}}

\newcommand{\out}[2]{|#1\rangle \!\langle #2|\xspace}
\newcommand{\outs}[1]{|#1\rangle \!\langle #1|\xspace}
\newcommand{\inprod}[2]{\langle #1|#2 \rangle\xspace}
\renewcommand{\vec}[1]{\ensuremath{\boldsymbol{#1}}\xspace}

\newcommand{\B}{\ensuremath{{B_{t,s}^{\delta}}}\xspace}
\newcommand{\Brv}{\ensuremath{{B_{T,S}^{\delta}}}\xspace}
\newcommand{\tvec}[1]{\langle #1 |\xspace}
\newcommand{\kron}{\otimes\xspace}
\newcommand{\xor}{\oplus\xspace}

\newcommand{\ko}{\odot}
\newcommand{\kl}{\otimes}

\newcommand{\ho}{\texttt{+}}
\newcommand{\hl}{\texttt{-}}

\newcommand{\co}{0}
\newcommand{\cl}{1}

\newcommand{\ket}[1]{\ensuremath{| #1 \rangle}\xspace}         
\newcommand{\bra}[1]{\tvec{#1}}
\newcommand{\braket}[2]{\inprod{#1}{#2}}

\newcommand{\proj}[1]{\outs{#1}}
\newcommand{\tr}{\mathrm{tr}}
\newcommand{\I}{\mathbb{I}}
\newcommand{\dist}{\Delta}
\newcommand{\CNOT}{U_{\text{\sc cnot}}}

\newcommand{\set}[1]{\{#1\}}
\newcommand{\Set}[2]{\{#1:#2\}}
\newcommand{\setn}[1][n]{[n]}
\newcommand{\C}{\mathbb{C}}
\newcommand{\R}{\mathbb{R}}
\newcommand{\N}{\mathbb{N}}

\newcommand{\etal}{{\it et al.}\xspace}

\newcommand{\A}{{\cal A}}

\newcommand{\QOT}{{\bf\texttt{QOT}}\xspace}
\newcommand{\QOTx}{{\bf\texttt{QOT}}$^{\text{\bf\texttt{*}}}$\xspace}
\newcommand{\QKD}{{\bf\texttt{QKD}}\xspace}

\newcommand{\qt}{\ensuremath{\vec{q}\idx{t}}\xspace}
\newcommand{\qT}{\ensuremath{\vec{q}\idx{T}}\xspace}
\newcommand{\qbT}{\ensuremath{\vec{q}\idx{\bar{T}}}\xspace}
\newcommand{\func}{\ensuremath{f(t,\qt,s)}\xspace}

\newcommand{\refsec}[1]{Section~\ref{sec:#1}\xspace}
\newcommand{\refthm}[1]{Theorem~\ref{thm:#1}\xspace}
\newcommand{\reflem}[1]{Lemma~\ref{lem:#1}\xspace}
\newcommand{\refcor}[1]{Corollary~\ref{cor:#1}\xspace}
\newcommand{\refrem}[1]{Remark~\ref{rem:#1}\xspace}
\newcommand{\refex}[1]{Example~\ref{ex:#1}\xspace}
\newcommand{\refapp}[1]{Appendix~\ref{app:#1}\xspace}
\newcommand{\refdef}[1]{Definition~\ref{def:#1}\xspace}
\newcommand{\refprop}[1]{Proposition~\ref{prop:#1}\xspace}

\newcommand{\idx}[1]{_{#1}}
\newcommand{\weight}[1]{\ensuremath{\mathrm{wt}(#1)}\xspace}
\newcommand{\relwt}[1]{\ensuremath{\omega(#1)}\xspace}

\newcommand{\linspan}{\ensuremath{\mathrm{span}}}

\newcommand{\hmin}[1]{\ensuremath{\mathrm{H_{min}}\hspace{-1pt}\left(#1\right)}\xspace}

\newcommand{\hbin}{\ensuremath{\mathrm{h}}}
\newcommand{\minent}{min-entropy\xspace}

\newcommand{\id}{\ensuremath{\mathbb{I}}\xspace}

\newcommand{\cep}{\ensuremath{\varepsilon_{\mathrm{class}}^{\delta}}\xspace}
\newcommand{\Cep}{\ensuremath{\varepsilon_{\mathrm{class}}^{\delta}(\Psi)}\xspace}
\newcommand{\qep}{\ensuremath{\varepsilon_{\mathrm{quant}}^{\delta}}\xspace}
\newcommand{\Qep}{\ensuremath{\varepsilon_{\mathrm{quant}}^{\delta}(\Psi)}\xspace}
\renewcommand{\H}{\mathcal{H}}

\newcommand{\refeq}[1]{(\ref{eq:#1})\xspace}
\theoremstyle{plain} \newtheorem{thm}{Theorem}
\theoremstyle{plain} \newtheorem{lemma}{Lemma}
\theoremstyle{plain} \newtheorem{remark}{Remark}
\theoremstyle{plain} \newtheorem{definition}{Definition}
\theoremstyle{plain} \newtheorem{corollary}{Corollary}
\theoremstyle{plain} \newtheorem{prop}{Proposition}

\newtheoremstyle{example}{\topsep}{\topsep}%
     {}
     {}
     {\bfseries}
     {.}
     {\newline}
     {\thmname{#1}\thmnumber{ #2}\thmnote{ {\normalfont(#3)}}}

   \theoremstyle{example}
   \newtheorem{example}{Example} 

\newcommand{\niek}[1]{\noindent{\color{blue} \sf\small [Niek: #1]}}
\newcommand{\omt}[1]{}
\newcommand{\qkdsestrat}{
The subset $t \subset \setn \times \set{0,1}$ is chosen as $t = \set{(1,j_1),\ldots,(n,j_n)}$ where every $j_{k}$ is picked independently at random in $\set{0,1}$. In other words, $t$ selects one element from each pair $(q_{i0},q_{i1})$. 
Furthermore, the estimate for $\relwt{\vec{q}\idx{\bar{t}}}$ is computed from $\qt$ as $f(t,\qt,s) = \relwt{\vec{q}\idx{s}}$ where the seed $s$ is a random subset $s \subset t$ of size $k$. 
}
\newcommand{\qkdseerror}{
\newcommand{\qtT}{\ensuremath{\tilde{\vec{q}}\idx{T}}\xspace}
\newcommand{\qtbT}{\ensuremath{\tilde{\vec{q}}\idx{\bar{T}}}\xspace}

For $\A = \set{0,1}$, a bound on the error probability \cep is obtained as follows. 
Let $\vec{q}$ be arbitrary, indexed as discussed earlier. 
First, we show that $\relwt{\qbT}$ is likely to be close to $\relwt{\qT}$. 
For this, consider the pairs $(q_{i0},q_{i1})$ for which $q_{i0} \neq q_{i1}$. Let there be $\ell$ such pairs (where obviously $\ell\leq n$.) We denote the restrictions of \qT and \qbT to these indices $i$ with $q_{i0} \neq q_{i1}$ by \qtT and \qtbT, respectively. It is easy to see that $\weight{\qtT} +\weight{\qtbT} = \ell$. It follows that for any $\epsilon > 0$ we have  
\begin{align*}
\Pr\bigl[|  \relwt{\qbT}- \relwt{\qT} | &\geq \epsilon\bigr] 
= \Pr\bigl[\left| \weight{\qT} - \weight{\qbT} \right| \geq n\epsilon \bigr] \\[1ex]
&= \Pr\bigl[ \left| \weight{\qtT} - \weight{\qtbT} \right| \geq n\epsilon \bigr] 
=\Pr\bigl[\left|2\weight{\qtT} - \ell \right| \geq n\epsilon \bigr]\\
& \textstyle\leq 2\exp\left( -2 \left(\frac{n\epsilon}{2\ell}\right)^2 \ell\right) =  2\exp\left( -\frac{n\epsilon^2}{2}\cdot \frac{n}{\ell}\right) 
 \leq 2\exp\left( -\frac12 \epsilon^2 n\right) \, ,
\end{align*}
where the third equality follows from replacing $\weight{\qtbT} $ by $\ell - \weight{\qtT}$, and the first inequality follows from Hoeffding's inequality (as each entry of $\weight{\qtT}$ is $0$ with independent probability $\frac12$). 

Furthermore, for any $\gamma > 0$ we have the following relation involving $\vec{q}\idx{S}$:
\begin{align*}
\Pr\bigl[ \left| \relwt{\qT} - \relwt{\vec{q}\idx{S}} \right| \geq \gamma \bigr] \leq 2\exp\left(- 2 k \gamma^2 \right),
\end{align*}
which follows from directly applying Hoeffding's inequality.
Applying the union bound and letting $\delta =\epsilon+ \gamma$, we obtain
\begin{align*}
\label{eq:err}
\cep = \max_{\vec{q}} \Pr\bigl[\left| \relwt{\qbT} - \relwt{\vec{q}\idx{S}} \right| \geq \delta \bigr] & < 2 \min_{\epsilon \in (0, \delta)} \left[ \exp\left( -\textstyle\frac12 \epsilon^2 n \right) + \exp\left(- 2 k (\delta - \epsilon)^2 \right) \right] \\
&\textstyle\leq 4 \exp \left( -\frac{2 kn\delta ^2}{(2\sqrt{k}+\sqrt{n})^2} \right)
\leq 4 \exp \left( -\frac13 \delta ^2 k \right) \, ,
\end{align*}
where the last line follows from choosing $\epsilon$ such that the two exponents coincide, and from doing some simplifications while assuming $k \leq n/2$.
}


\newboolean{includeFifthEx} 
\setboolean{includeFifthEx}{true}

\begin{document}

\bibliographystyle{alpha}

\maketitle

\begin{abstract}
We propose a framework for analyzing classical sampling strategies for estimating the Hamming weight of a large string from a few sample positions, when applied to a multi-qubit quantum system instead. 
The framework shows how to interpret the result of such a strategy and how to define its accuracy when applied to a quantum system. 
Furthermore, we show how the accuracy of any strategy relates to its accuracy in its classical usage, which is well understood for the important examples. 

We show the usefulness of our framework by using it to obtain new and simple security proofs for the following quantum-cryptographic schemes: quantum oblivious-transfer from bit-commitment, and BB84 quantum-key-distribution. 
\paragraph{Keywords:} Random sampling, quantum key distribution, quantum oblivious transfer.
\end{abstract}

\section{Introduction}

Sampling allows to learn some information on a large population by merely looking at a comparably small number of individuals. For instance it is possible to predict the outcome of an election with very good accuracy by analyzing a relatively small subset of all the votes. 
In this work, we initiate the study of sampling in a {\em quantum} population, where we want to be able to learn information on a large quantum state by measuring only a small part. Specifically, we investigate the quantum-version of the following classical sampling problem (and of variants thereof). Given a bit-string  $\vec{q} = (q_1,\ldots,q_n) \in \set{0,1}^n$ of length $n$, the task is to estimate the Hamming weight of \vec{q} by sampling and looking at only a few positions within \vec{q}. This classical sampling problem is well understood. For instance the following particular {\em sampling strategy} works well: sample (with or without replacement) a linear number of positions uniformly at random, and compute an estimate for the Hamming weight of \vec{q} by scaling the Hamming weight of the sample accordingly; Hoeffding's bounds guarantee that the estimate is close to the real Hamming weight except with small probability. 
Such a sampling strategy in particular allows to {\em test} whether \vec{q} is close to the all-zero string $(0,\ldots,0)$ by looking only at a relatively small number of positions, where the test is accepted if and only if all the sample positions are zero, i.e., the estimated Hamming weight vanishes. 

In the quantum version of the above sampling problem, the string $\vec{q}$ is replaced by a $n$-qubit quantum system~$A$. It is obvious that a sampling strategy from the classical can be {\em applied} to the quantum setting as well: pick a sample of qubit positions within $A$, measure (in the computational basis) these sample positions, and compute the estimate as dictated by the sampling strategy from the observed values (i.e., typically, scale the Hamming weight of the measured sample appropriately). 
However, what is a-priori not clear, is how to formally {\em interpret} the computed estimate. 
In the special case of testing closeness to the all-zero string, one expects that if the measurement of a random sample only produces zeros then the initial state of $A$ must have been close to the all-zero state $\ket{\co}\cdots\ket{\co}$. But what is the right way to measure closeness here? For instance it must allow for states of the form $\ket{\vec{q}}$ where $\vec{q} \in \set{0,1}^n$ has small Hamming weight, but it must also allow for superpositions with arbitrary states that come with a very small amplitude. 
In the general case of a sampling strategy that, in its classical usage, aims at estimating the Hamming weight (rather that at testing closeness to the all-zero string), it is not even clear what the estimate actually estimates when the sampling strategy is applied to a $n$-qubit quantum system, since we cannot speak of the Hamming weight of a quantum state. 
Furthermore, how can we quantify in a meaningful way how accurate a sampling strategy is, and how hard is it to compute (good bounds on) the accuracy of different sampling strategies,  when applied to a quantum population? Finally, a last subtlety that is inherent to the quantum setting is that the execution of a sampling strategy actually changes the state of $A$ due to the measurements. 

In this work, we present a framework that answers the above questions and allows us to fully understand how a classical sampling strategy behaves when applied to a quantum population, i.e., to a $n$-qubit system or, more general, to $n$ copies of an arbitrary ``atomic'' system. 
Our framework incorporates the following. 
First, we specify an abstract property on the state of $A$ (after the measurements done by the sampling strategy), with the intended meaning that this is the property one should conclude from the outcome of the sampling strategy when applied to $A$. We also demonstrate that this property has useful consequences: specifically, that a suitable measurement will lead to a high-entropy outcome; this is handy in particular for quantum-cryptographic purposes. 
Then, we define a meaningful measure, sort of a ``quantum error probability'' (although technically speaking it is not a probability), that tells how reliable it is to conclude the specified property from the outcome of the sampling strategy. Finally, we show that for {\em any} sampling strategy, the quantum error probability of the strategy, as we define it, is bounded by the square-root of its classical error probability. This means that in order to understand how well a sampling strategy performs in the quantum setting, it suffices to analyze it in the classical setting, which is typically much simpler. Furthermore, for typical sampling strategies, like when picking the sample uniformly at random, there are well-known good bounds on the classical error probability. 

We demonstrate the usefulness of our framework by means of two applications. Our applications do not constitute actual new results, but they provide new and simple(r) proofs for known results, both in the area of quantum cryptography. We take this as strong indication for the usefulness of the framework, and that the framework is likely to prove valuable in other applications as well. 

The first application is to quantum oblivious transfer (QOT). It is well known that QOT is not possible from scratch; however, one can build a secure QOT scheme when given a bit-commitment (BC) primitive ``for~free''.%
\footnote{We use BC and OT as short-hands of the respective abstract primitives, bit commitment and oblivious transfer, and we write QBC and QOT for potential schemes implementing the respective primitives in the quantum setting. } 
Like QOT, also QBC is impossible from scratch; nevertheless, the implication from BC to QOT is interesting from a theoretical point of view, since the corresponding implication does not hold in the classical setting. 
The existence of a QOT scheme based on a BC was suggested by Bennett \etal in 1991~\cite{bennett1992};%
\footnote{At that time, QBC was thought to be possible, and thus the QOT scheme was claimed to be implementable from scratch. } 
however, no security proof was provided. Mayers and Salvail proved security of the QOT scheme against a restricted adversary that only performs {\em individual} measurements~\cite{mayers1994}, and finally, in 1995, Yao gave a security proof against a general adversary, which is allowed to do fully {\em coherent} measurements~\cite{yao1995}. However, from today's perspective, Yao's proof is still not fully satisfactory: it is very technical, without intuition and hard to follow, and it measures the adversary's information in terms of ``accessible information'', which has proven to be a too weak information measure.

Here, we show how our framework for analyzing sampling strategies in the quantum setting leads to a conceptually very simple and easy-to-understand security proof for QOT from BC. The proof essentially works as follows: When considering a purified version of the QOT scheme, the commit-and-open phase of the QOT scheme can be viewed as executing a specific sampling strategy. From the framework, it then follows that some crucial piece of information has high entropy from the adversary's point of view. The proof is then concluded by applying the privacy amplification theorem. 
In recent work of the second author \cite{damgaard2009}, it is shown that the same kind of analysis is not restricted to QOT but actually applies to a large class of two-party quantum-cryptographic schemes which are based on a commit-and-open phase.  

The second application we discuss is to quantum key-distribution (QKD). Also here, our framework allows for a simple and easy-to-understand security proof, namely for the BB84 QKD scheme.%
\footnote{Actually, we prove security for an entanglement-based version of BB84, which was first proposed by Ekert, and which implies security for the original BB84 scheme. } 
Similar to our proof for QOT, we can view the checking phase of the BB84 scheme as executing a specific sampling strategy (although here some additional non-trivial observation needs to be made).
From the framework, we can then conclude that the raw key has high entropy from the adversary's point of view, and again privacy amplification finishes the job. 

As for QOT, also QKD schemes initially came without security proofs, and proving QKD schemes rigorously secure turned out to be an extremely challenging and subtle task. Nowadays, though, the security of QKD schemes is better understood, and we know of various ways of proving, say, BB84 secure, ranging from Shor and Preskill's proof based on quantum error-correcting codes to Renner's approach using a quantum De Finetti theorem which allows to reduce security against general attacks to security against the much weaker class of so-called collective attacks. As such, our proof may safely be viewed as ``yet another BB84 QKD proof''. Nevertheless, when compared to other proofs, it has some nice features: It provides an explicit and easy-to-compute expression for the security of the scheme (in contrast to most proofs in the literature which merely provide an asymptotic analysis), it does not require any ``symmetrization of the qubits'' (e.g.\ by applying a random permutation) from the protocol, and it is technically not very involved (e.g.\ compared to the proofs involving Renner's quantum De Finetti theorem). Furthermore, it gives immediately a {\em direct} security proof, rather than a reduction to the security against collective attacks. 

\section{Notation, Terminology, and Some Tools}
\label{sec:notation}

\paragraph{Strings and Hamming Weight.}
Throughout the paper, $\A$ denotes some fixed finite alphabet with $0 \in \cal A$. It is safe to think of $\A$ as $\set{0,1}$, but our claims also hold for larger alphabets. For a string $\vec{q}=(q_1,\ldots,q_n)\in\A^n$ of arbitrary length $n~\geq~0$, the \emph{Hamming weight} of $\vec{q}$ is defined as the number of non-zero entries in $\vec{q}$: 
$\weight{\vec{q}} := \big|\Set{i \in \setn}{q_i \neq 0} \big|$, where we use $\setn$ as short hand for $\set{1,\ldots,n}$. 
We also use the notion of the \emph{relative} Hamming weight of $\vec{q}$, defined as 
$\relwt{\vec{q}} := \weight{\vec{q}}/n$. By convention, the relative Hamming weight of the empty string $\perp$ is set to $\relwt{\perp} := 0$.
For a string $\vec{q}\!=\!(q_1,\ldots,q_n)\!\in\!\A^n$ and a subset $J \subset \setn$, we write $\vec{q}\idx{J} := (q_i)_{i \in J}$ for the restriction of $\vec{q}$ to the positions $i \in J$. 

\paragraph{Random Variables and Hoeffding's Inequalities.}
Formally, a {\em random variable} is a function $X: \Omega \rightarrow \cal X$ with the sample space $\Omega$ of a probability space
$(\Omega,\Pr)$ as domain, and some arbitrary finite set $\cal X$ as range. The {\em distribution} of $X$, which we denote as $P_X$, is given by $P_X(x) = \Pr[X\!=\!x] = \Pr[\Set{\omega \in \Omega}{X(\omega)\!=\!x}]$. 
The joint distribution of two (or more) random variables $X$ and $Y$ is denoted by $P_{XY}$, i.e., $P_{XY}(x,y) = \Pr[X\!=\!x \wedge Y\!=\!y]$. 
Usually, we leave the probability space $(\Omega,\Pr)$ implicit, and understand random variables to be defined by their joint distribution, or by some ``experiment'' that uniquely determines their joint distribution. 
Random variables $X$ and $Y$ are {\em independent} if $P_{XY} = P_X P_Y$ (in the sense that $P_{XY}(x,y) = P_X(x) P_Y(y) \;\forall\, x\in {\cal X},y\in \cal Y$).

We will make extensive use of Hoeffding's inequalities for random sampling with and without replacement, as developed in~\cite{hoeffding1963}. The following theorem summarizes these inequalities, tailored to our needs.%
\footnote{Interestingly, the inequality with respect to random sampling {\em without} replacement does not seem to be very commonly known. }  

\begin{thm}[Hoeffding]
\label{thm:hoeffding}
Let $\vec{b} \in \set{0,1}^n$ be a bit string with relative Hamming weight $\mu = \relwt{\vec{b}}$. Let the random variables $X_1, X_2,\ldots, X_k$ be obtained by sampling $k$ random entries from $\vec{b}$ {\em with replacement}, i.e., the $X_i$'s are independent and $P_{X_i}(1) = \mu$. Furthermore, let the random variables $Y_1, Y_2,\ldots, Y_k$ be obtained by sampling $k$ random entries from $\vec{b}$ {\em without replacement}. Then, for any $\delta > 0$, the random variables $\bar{X}:=\frac{1}{k}\sum_i X_i$ and $\bar{Y}:=\frac{1}{k}\sum_i Y_i$ satisfy 
$$
\Pr\bigl[|\bar{Y} - \mu| \geq \delta\bigr] \leq \Pr\bigl[|\bar{X} - \mu| \geq \delta \bigr] \leq 2\exp(-2\delta^2 k) \, .
$$
For the case of sampling without replacement, a slightly sharper bound was found by Serfling \cite{serfling74}:
\[
\Pr\bigl[| \bar{Y} - \mu| \geq \delta\bigr] \leq 2\exp\bigl(-\textstyle\frac{2\delta^2kn}{n-k+1}\bigr).
\]
\end{thm}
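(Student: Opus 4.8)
I would split the statement into its three parts and prove them in order: the with-replacement bound, then the comparison between with- and without-replacement (which also yields the without-replacement bound), and finally Serfling's refinement.

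For the with-replacement inequality I would run the standard exponential-moment (Chernoff) argument. Fix $t>0$; by Markov's inequality applied to $e^{t\sum_i(X_i-\mu)}$ together with independence,
\[
\Pr[\bar X - \mu \ge \delta]\;\le\;e^{-tk\delta}\,\prod_{i=1}^{k}E\!\left[e^{t(X_i-\mu)}\right].
\]
Each $X_i-\mu$ is mean-zero with values in an interval of length $1$, so by Hoeffding's lemma $E[e^{t(X_i-\mu)}]\le e^{t^2/8}$; substituting and minimizing $e^{-tk\delta+kt^2/8}$ over $t$ (the minimum is at $t=4\delta$) gives $e^{-2\delta^2 k}$. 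The lower tail is handled symmetrically, and a union bound contributes the factor $2$. Hoeffding's lemma itself is the short convexity estimate bounding $x\mapsto e^{tx}$ on $[a,b]$ by its chord and then optimizing in $t$.

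For the comparison $\Pr[|\bar Y-\mu|\ge\delta]\le\Pr[|\bar X-\mu|\ge\delta]$ the engine is Hoeffding's convexity comparison (Theorem~4 of \cite{hoeffding1963}): for every continuous convex $g:\R\to\R$ one has $E[g(\sum_i Y_i)]\le E[g(\sum_i X_i)]$, the point being that sampling without replacement yields a more tightly concentrated sum and convex functionals detect exactly that. I would prove this comparison following Hoeffding — one writes the without-replacement sum as an average of the with-replacement sum over a suitable symmetrization, so that Jensen's inequality applies — and then apply it with $g(x)=e^{t(x-k\mu)}$ and $g(x)=e^{-t(x-k\mu)}$; this transports the two exponential-moment estimates of the previous paragraph verbatim to $\sum_i Y_i$, and rerunning the Chernoff step gives $\Pr[|\bar Y-\mu|\ge\delta]\le 2\exp(-2\delta^2 k)$. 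The sharper pointwise domination (rather than merely the exponential bound) is classical in this Bernoulli case — $\sum_i Y_i$ is hypergeometric and arises from the binomial $\sum_i X_i$ by conditioning on the total of an enlarged independent sample, and conditioning on the mean value of that total only concentrates the tails, by log-concavity of the binomial — but it is not needed for any application in the paper, so I would simply cite it.

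For Serfling's bound the comparison to with-replacement is too lossy once $k$ is a constant fraction of $n$, because it ignores that the last draws of a without-replacement sample are nearly determined. Here I would instead apply the exponential-moment method directly to the Doob martingale $M_j:=\sum_{i\le j}Y_i-j\mu$ obtained by revealing the sampled entries one at a time: after $j$ draws only $n-j$ population entries remain, so the conditional exponential moment of the $j$-th increment carries a factor governed by $\tfrac{n-j}{n-1}$ instead of a constant; multiplying over $j=0,\dots,k-1$ and optimizing over $t$ produces the gain $\tfrac{n}{n-k+1}$, exactly as in \cite{serfling74}. The main obstacle throughout is the convexity comparison in the second step: identifying the symmetrization under which the without-replacement sum becomes an honest average — hence a Jensen target — of the with-replacement sum. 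Everything around it (the Chernoff computations, and Serfling's refinement) is either routine or citable; and since every application in this paper invokes only the numerical bound $2\exp(-2\delta^2 k)$, in the one-sided direction that the classical error probability dominates the quantum one, black-boxing the second and third parts via \cite{hoeffding1963,serfling74} is entirely legitimate.
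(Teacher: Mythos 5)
Your proposal is correct, but note that the paper itself gives no proof of this theorem: it is imported verbatim from the literature with citations to \cite{hoeffding1963} and \cite{serfling74}, and every later use invokes only the final exponential bounds. Your sketch accurately reconstructs the standard arguments behind those citations (the Chernoff/Hoeffding-lemma computation, Hoeffding's convex-ordering comparison of sampling without versus with replacement, and Serfling's martingale refinement), and you correctly flag that the pointwise tail domination $\Pr[|\bar{Y}-\mu|\ge\delta]\le\Pr[|\bar{X}-\mu|\ge\delta]$ is a strictly stronger classical fact that one would cite rather than rederive — so your treatment is, if anything, more detailed than the paper's.
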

\noindent In \cite{serfling74}, only a one-sided bound is given. We prove in \refapp{serfling} that this implies a two-sided bound.

\paragraph{Quantum Systems and States.}
We assume the reader to be familiar with the basic concepts of quantum information theory; we merely fix some terminology and notation here. 
A quantum system $A$ is associated with a complex Hilbert space, $\H = \C^d$, its {\em state space}. The {\em state} of $A$ is given, in the case of a {\em pure} state, by a norm-$1$ state vector $\ket{\varphi} \in \H$, respectively, in the case of a {\em mixed} state, by a trace-$1$ positive-semi-definite operator/matrix $\rho: \H \rightarrow \H$. In order to simplify language, we are sometimes a bit sloppy in distinguishing between a quantum system, its state, and the state vector or density matrix describing the state. 
By default, we write $\H_A$ for the state space of system $A$, and $\rho_A$ (respectively $\ket{\varphi_A}$ in case of a pure state) for the state of $A$. 

The state space of a {\em bipartite} quantum system $AB$, consisting of two (or more) subsystems, is given by $\H_{AB} = \H_A \otimes \H_B$. If the state of $AB$ is given by $\rho_{AB}$ then the state of subsystem $A$, when treated as a stand-alone system, is given by the {\em partial trace} $\rho_A = \tr_B(\rho_{AB})$, and correspondingly for $B$. 
{\em Measuring} a system $A$ in basis $\set{\ket{i}}_{i \in I}$, where $\set{\ket{i}}_{i \in I}$ is an orthonormal basis of $\H_A$, means applying the measurement described by the projectors $\set{\proj{i}}_{i \in I}$, such that outcome $i \in I$ is observed with probability $p_i = \tr(\proj{i} \rho_A)$ (respectively $p_i = |\braket{i}{\varphi_A}|^2$ in case of a pure state). 
If $A$ is a subsystem of a bipartite system $AB$, then it means applying the measurement described by the projectors $\set{\proj{i} \otimes \I_B}_{i \in I}$, where $\I_B$ is the identity operator on $\H_B$.

A {\em qubit} is a quantum system $A$ with state space~$\H_A = \C^2$. 
The {\em computational basis} $\set{\ket{\co},\ket{\cl}}$ (for a qubit) is given by $\ket{\co} = {1 \choose 0}$ and $\ket{\cl} = {0 \choose 1}$, and the {\em Hadamard basis} by $H\set{\ket{\co},\ket{\cl}} = \set{H\ket{\co},H\ket{\cl}}$, where $H$ denotes the 2-dimensional {\em Hadamard matrix} $H = \frac{1}{\sqrt2} \big(\begin{smallmatrix} 1 & \;\; 1 \\ 1 & -1 \end{smallmatrix}\big)$. 
The state space of an $n$-qubit system $A = A_1\cdots A_n$ is given by $\H_A = (\C^2)^{\otimes n} = \C^2 \otimes \cdots \otimes \C^2$. For $\vec{x} = (x_1,\ldots,x_n)$ and $\vec{\theta} = (\theta_1,\ldots,\theta_n)$ in $\set{0,1}^n$, we write $\ket{\vec{x}}$ for $\ket{\vec{x}} = \ket{x_1}\cdots\ket{x_n}$ and $H^{\vec{\theta}}$ for $H^{\vec{\theta}} = H^{\theta_1} \otimes \cdots \otimes H^{\theta_n}$, and thus $H^{\vec{\theta}}\ket{\vec{x}}$ for $H^{\vec{\theta}}\ket{\vec{x}} = H^{\theta_1}\ket{x_1}\cdots H^{\theta_n}\ket{x_n}$. Finally, we write $\set{\ket{\co},\ket{\cl}}^{\otimes n} = \Set{\ket{\vec{x}}}{\vec{x} \in \set{0,1}^n}$ for the computational basis on an $n$-qubit system, and  $H^{\vec{\theta}}\set{\ket{\co},\ket{\cl}}^{\otimes n} = \Set{H^{\vec{\theta}}\ket{\vec{x}}}{\vec{x} \in \set{0,1}^n} = H^{\theta_1}\set{\ket{\co},\ket{\cl}}\otimes\cdots\otimes H^{\theta_n}\set{\ket{\co},\ket{\cl}}$ for the basis that is made up of the computational basis on the subsystems $A_i$ with $\theta_i = 0$ and of the Hadamard basis on the subsystems $A_i$ with $\theta_i = 1$. 
In order to simplify notation, we will sometimes abuse terminology and speak of the basis $\vec{\theta}$ when we actually mean $H^{\vec{\theta}}\set{\ket{\co},\ket{\cl}}^{\otimes n}$. 

We measure closeness of two states $\rho$ and $\sigma$ by their {\em trace distance}: $\dist(\rho,\sigma) := \frac12 \tr|\rho-\sigma|$, where for any square matrix $M$, $|M|$ denotes the positive-semi-definite square-root of $M^\dagger M$. For {\em pure} states $\ket{\varphi}$ and $\ket{\psi}$, the trace distance of the corresponding density matrices coincides with $\dist(\proj{\varphi},\proj{\psi}) = \sqrt{1-|\braket{\varphi}{\psi}|^2}$. 
If the states of two systems $A$ and $B$ are $\epsilon$-close, i.e. $\dist(\rho_A,\rho_B) \leq \epsilon$, then $A$ and $B$ cannot be distinguished with advantage greater than $\epsilon$; in other words, $A$ behaves exactly like $B$, except with probability $\epsilon$. 

\paragraph{Classical and Hybrid Systems (and States).}
Subsystem $X$ of a bipartite quantum system $XE$ is called {\em classical}, if the state of $XE$ is given by a density matrix of the form
$$
\rho_{XE} = \sum_{x \in \cal X} P_X(x) \proj{x} \otimes \rho_{E}^x \, ,
$$
where $\cal X$ is a finite set of cardinality $|{\cal X}| = \dim(\H_X)$, $P_X:{\cal X} \rightarrow [0,1]$ is a probability distribution, $\set{\ket{x}}_{x \in \cal X}$ is some fixed orthonormal basis of $\H_X$, and $\rho_E^x$ is a density matrix on $\H_E$ for every \mbox{$x \in \cal X$}. Such a state, called {\em hybrid} or {\em cq-} (for {\em c}lassical-{\em q}uantum) state, can equivalently be understood as consisting of a {\em random variable} $X$ with distribution $P_X$, taking on values in $\cal X$, and a system $E$ that is in state $\rho_E^x$ exactly when $X$ takes on the value $x$. This formalism naturally extends to two (or more) classical systems $X$, $Y$ etc. 

If the state of $XE$ satisfies $\rho_{XE} = \rho_X \otimes \rho_E$, where $\rho_X = \tr_E(\rho_{XE}) = \sum_x P_X(x) \proj{x}$ and $\rho_E = \tr_X(\rho_{XE}) = \sum_x P_X(x) \rho_E^x$, then $X$ is {\em independent} of $E$, and thus no information on $X$ can be obtained from system~$E$. Moreover, if $\rho_{XE} = \frac{1}{|{\cal X}|} \I_X \otimes \rho_E$, where $\I_X$ denotes the identity on $\H_X$, then $X$ is {\em random-and-independent} of $E$. This is what is aimed for in quantum cryptography, when $X$ represents a classical cryptographic key and $E$ the adversary's potential quantum information on~$X$. 

It is not too hard to see that for two hybrid states $\rho_{XE}$ and $\rho_{XE'}$ with the same (distribution of) $X$, the trace distance between $\rho_{XE}$ and $\rho_{XE'}$ can be computed as $\dist(\rho_{XE},\rho_{XE'}) = \sum_x P_X(x) \dist(\rho_{E}^x,\rho_{E'}^x)$.

\paragraph{Min-Entropy and Privacy Amplification.}
We make use of Renner's notion of the {\em conditional min-entropy} $\hmin{\rho_{XE}|E}$ of a  system $X$ conditioned on another system $E$~\cite{Renner05}. 
Although the notion makes sense for arbitrary states, we restrict to hybrid states $\rho_{XE}$ with classical $X$. 
If the hybrid state $\rho_{XE}$ is clear from the context, we may write $\hmin{X|E}$ instead of $\hmin{\rho_{XE}|E}$.
The formal definition, given by $\hmin{\rho_{XE}|E}:= \sup_{\sigma_E}\max\Set{h \in \R}{2^{-h} \cdot \id_X \kron \sigma_E - \rho_{XE} \geq 0}$ where the supremum is over all density matrices $\sigma_E$ on $\H_E$, is not very relevant to us; we merely rely on some elementary properties. For instance, the {\em chain rule} guarantees that $\hmin{X|YE} \geq \hmin{XY|E} - \log(|{\cal Y}|) \geq \hmin{X|E} - \log(|{\cal Y}|)$ for classical $X$ and $Y$ with respective ranges $\cal X$ and $\cal Y$, where here and throughout the article $\log$ denotes the {\em binary} logarithm, whereas $\ln$ denotes the {\em natural} logarithm. Furthermore, it holds that if $E'$ is obtained from $E$ by measuring (part of) $E$, then $\hmin{X|E'} \geq \hmin{X|E}$.

Finally, we make use of Renner's privacy amplification theorem~\cite{RK05,Renner05}, as given below. 
Recall that a function $g:{\cal R} \times {\cal X} \rightarrow \set{0,1}^\ell$ is called a {\em universal} (hash) function, if for the random variable $R$, uniformly distributed over $\cal R$, and for any distinct $x,y \in \cal X$: $\Pr[g(R,x)\!=\!g(R,y)] \leq 2^{-\ell}$.

\begin{thm}[Privacy amplification]\label{thm:PA}
Let $\rho_{XE}$ be a hybrid state with classical $X$. Let $g:{\cal R} \times {\cal X} \to \set{0,1}^\ell$ be a universal hash function, and let $R$ be uniformly distributed over $\cal R$, independent of $X$ and~$E$. Then $K = g(R,X)$ satisfies
$$
\dist\bigl( \rho_{KRE},{\textstyle \frac{1}{|{\cal K}|}} \id_K \otimes \rho_{RE} \bigr) \leq \frac12 \cdot 2^{-\frac12(\hmin{X|E} - \ell)} \, . 
$$
\end{thm}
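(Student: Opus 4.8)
This is the quantum Leftover Hash Lemma, and the shortest route is to invoke \cite{RK05,Renner05} directly; for completeness I outline a self-contained proof. The plan is to bound the trace distance blockwise (over the classical registers $K$ and $R$) by a $\sigma_E$-weighted Hilbert--Schmidt norm, expand the resulting square, use universality of $g$ to cancel the ``collision'' contribution against the ideal state, and finally read off the bound from the definition of $\hmin{X|E}$.

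First I would fix, for an arbitrary $\epsilon>0$, a density operator $\sigma_E$ on $\H_E$ with $\rho_{XE}\leq 2^{-h+\epsilon}\,\id_X\otimes\sigma_E$, where $h:=\hmin{X|E}$ (such a $\sigma_E$ exists by the definition of conditional min-entropy, and $\epsilon$ will be sent to $0$ at the end); restricting everything to the support of $\sigma_E$, we may assume $\sigma_E$ invertible. The technical workhorse is the elementary fact that for \emph{any} Hermitian operator $\Lambda_E$ on $\H_E$ and \emph{any} density operator $\sigma_E$ on $\H_E$,
\[
\tr|\Lambda_E|\;\leq\;\| \sigma_E^{-1/4}\,\Lambda_E\,\sigma_E^{-1/4} \|_2 \, ,
\]
which follows by writing $\tr|\Lambda_E|=\sup_U|\tr[U\Lambda_E]|=\sup_U|\tr[(\sigma_E^{1/4}U\sigma_E^{1/4})(\sigma_E^{-1/4}\Lambda_E\sigma_E^{-1/4})]|$ over unitaries $U$, applying Cauchy--Schwarz for the Hilbert--Schmidt inner product, and using $\|\sigma_E^{1/4}U\sigma_E^{1/4}\|_2^2=\tr[\sigma_E^{1/2}U^\dagger\sigma_E^{1/2}U]\leq\tr[\sigma_E]=1$. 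Writing $\Lambda:=\rho_{KRE}-\frac{1}{|{\cal K}|}\id_K\otimes\rho_{RE}$, and noting that $\Lambda$ is block-diagonal over the classical pairs $(k,r)$, I would apply this inequality within each of the $|{\cal K}|\,|{\cal R}|$ blocks and then Cauchy--Schwarz over the blocks to get
\[
\dist\!\bigl(\rho_{KRE},{\textstyle\frac{1}{|{\cal K}|}}\id_K\otimes\rho_{RE}\bigr)\;\leq\;\tfrac12\sqrt{|{\cal K}|\,|{\cal R}|}\;\bigl\| (\id_{KR}\otimes\sigma_E^{-1/4})\,\Lambda\,(\id_{KR}\otimes\sigma_E^{-1/4}) \bigr\|_2 \, .
\]

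Next I would expand the squared norm on the right. Conditioning on $R=r$ applies the map $x\mapsto g(r,x)$ to $X$, and $R$ is uniform and independent; expanding and averaging over $r$, the squared norm splits into a diagonal part $\tr[\tilde\rho_{XE}^2]$, where $\tilde\rho_{XE}:=(\id_X\otimes\sigma_E^{-1/4})\rho_{XE}(\id_X\otimes\sigma_E^{-1/4})$, a ``collision'' part summing over pairs $x\neq y$ with $g(r,x)=g(r,y)$, and a negative cross term with the ideal state equal (per $r$) to $-\frac{1}{|{\cal K}|}\|\sigma_E^{-1/4}\rho_E\sigma_E^{-1/4}\|_2^2$. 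By universality of $g$ (i.e.\ $\Pr[g(R,x)=g(R,y)]\leq 2^{-\ell}$ for $x\neq y$, so at most $2^{-\ell}|{\cal R}|$ of the seeds $r$ satisfy $g(r,x)=g(r,y)$), the collision part after averaging is at most $2^{-\ell}\|\sigma_E^{-1/4}\rho_E\sigma_E^{-1/4}\|_2^2$; since $|{\cal K}|=2^\ell$ this exactly cancels the cross term, the factors of $|{\cal R}|$ from the averaging and from the block count cancel, and what is left is $\dist(\rho_{KRE},\frac{1}{|{\cal K}|}\id_K\otimes\rho_{RE})\leq\tfrac12\sqrt{2^\ell\,\tr[\tilde\rho_{XE}^2]}$. (One recognizes $\tr[\tilde\rho_{XE}^2]$ as $2^{-H_2(X|E)}$ for the conditional collision entropy relative to $\sigma_E$.)

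Finally I would use the choice of $\sigma_E$: conjugating $\rho_{XE}\leq 2^{-h+\epsilon}\id_X\otimes\sigma_E$ by $\id_X\otimes\sigma_E^{-1/4}$ gives $\tilde\rho_{XE}\leq 2^{-h+\epsilon}\id_X\otimes\sigma_E^{1/2}$, whence $\tr[\tilde\rho_{XE}^2]\leq 2^{-h+\epsilon}\tr[\tilde\rho_{XE}(\id_X\otimes\sigma_E^{1/2})]=2^{-h+\epsilon}\tr[\rho_{XE}]=2^{-h+\epsilon}$; plugging this in and letting $\epsilon\to0$ gives $\tfrac12\sqrt{2^\ell\cdot 2^{-h}}=\tfrac12\cdot 2^{-\frac12(h-\ell)}$, as claimed. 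I expect the middle step to be the main obstacle: carrying out the expansion and keeping track of exactly which terms the universality bound cancels (and verifying the $|{\cal K}|=2^\ell$ bookkeeping) is where an error could creep in, together with the technical hygiene around a possibly singular $\sigma_E$ (handled by restricting to its support and the $\epsilon$-limit); the rest is routine operator algebra.
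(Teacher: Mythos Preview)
The paper does not prove Theorem~\ref{thm:PA}; it is quoted verbatim from Renner's work and simply cited as \cite{RK05,Renner05}. Your outline is precisely the standard Renner argument (trace norm $\leq$ $\sigma_E$-weighted Hilbert--Schmidt norm via Cauchy--Schwarz, expand, kill the collision sum with universality, bound the residual $\tr[\tilde\rho_{XE}^2]$ via the operator inequality defining $\hmin{X|E}$), and the bookkeeping you sketch checks out: the $|{\cal R}|$'s cancel, and with $|{\cal K}|=2^\ell$ the collision bound exactly offsets the cross/ideal term. So there is nothing to compare---you have reproduced the proof the paper defers to.
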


\noindent
Informally, Theorem~\ref{thm:PA} states that if $X$ contains sufficiently more than $\ell$ bits of entropy when given $E$, then $\ell$ nearly random-and-independent bits can be extracted from $X$.

\section{Sampling in a Classical Population}\label{sec:classic}

As a warm-up, and in order to study some useful examples and introduce some convenient notation, we start with the classical sampling problem, which is rather well-understood. 

\subsection{Sampling Strategies}

Let $\vec{q} = (q_1,\ldots,q_n) \in \mathcal{A}^n$ be a string of given length $n$. We consider the problem of estimating the relative Hamming weight $\relwt{\vec{q}}$ by only looking at a substring $\vec{q}\idx{t}$ of $\vec{q}$, for a small subset $t \subset \setn$.%
\footnote{More generally, we may consider the problem of estimating the Hamming {\em distance} of $\vec{q}$ to some arbitrary {\em reference string} $\vec{q_\circ}$; but this can obviously be done simply by estimating the Hamming weight of $\vec{q'} = \vec{q}-\vec{q_\circ}$.\label{fn:zero-string} }
Actually, we are interested in the equivalent problem of estimating the relative Hamming weight $\relwt{\vec{q}\idx{\bar{t}}}$ of the {\em remaining} string $\vec{q}\idx{\bar{t}}$, where $\bar{t}$ is the complement $\bar{t} = \setn \setminus t$ of $t$.%
\footnote{The reason for this, as will become clear later, is that in our applications, the sampled positions within $\vec{q}$ will be {\em discarded}, and thus we will be interested merely in the remaining positions. }
A canonical way to do so would be to sample a uniformly random subset (say, of a certain small size) of positions, and compute the relative Hamming weight of the sample as estimate. Very generally, we allow any strategy that picks a subset $t \subset \setn$ according to some probability distribution and computes the estimate for $\relwt{\vec{q}\idx{\bar{t}}}$ as some (possibly randomized) function of $t$ and $\qt$, i.e., as $\func$ for a {\em seed} $s$ that is sampled according to some probability distribution.
This motivates the following formal definition. 

\begin{definition}[Sampling strategy]\label{def:SE}
A {\em sampling strategy} $\Psi$ consists of a triple $(P_T,P_S,f)$, where $P_T$ is a distribution over the subsets of $\setn$, $P_S$ is a (independent) distribution over a finite set $\mathcal{S}$, and $f$ is a function
\begin{align*}
f: \{(t,v) : t \subset \setn, \vec{v} \in \mathcal{A}^{|t|}\} \times \mathcal{S}& \rightarrow \mathbb{R}. 
\end{align*}
\end{definition}
\noindent We stress that a sampling strategy $\Psi$, as defined here, specifies how to choose the sample subset as well as how to compute the estimate from the sample (thus a more appropriate but lengthy name would be a ``sample-and-estimate strategy''). 
\begin{remark}
\label{rem:container}
By definition, the choice of the seed $s$ is specified to be independent of~$t$, i.e., $P_{TS} = P_T P_S$. Sometimes, however, it is convenient to allow $s$ to depend on $t$. We can actually do so without contradicting \refdef{SE}. Namely, to comply with the independence requirement, we would simply choose a (typically huge) ``container'' seed that contains a seed for every possible choice of $t$, each one chosen with the corresponding distribution, and it is then part of $f$'s task, when given $t$, to select the seed that is actually needed out of the container seed.\footnote{Alternatively, we could simply drop the independence requirement in \refdef{SE}; however, we feel it is conceptually easier to think of the seed as being independently chosen. }
\end{remark}

\noindent 
A sampling strategy $\Psi$ can obviously also be used to {\em test} if $\vec{q}$ (or actually $\vec{q}\idx{\bar{t}}$) is close to the all-zero string $0\cdots0$: compute the estimate for $\relwt{\vec{q}\idx{\bar{t}}}$ as dictated by $\Psi$, and {\em accept} if the estimate vanishes and else {\em reject}. 

We briefly discuss \ifthenelse{\boolean{includeFifthEx}}{five}{four} example sampling strategies. The examples should illustrate the generality of the definition, and some of the examples will be used later on; however, the reader is free to skip (some of) them. We start with the canonical example mentioned in the beginning. 

\begin{example}[Random sampling {\em without} replacement]\label{ex:withoutreplacement}
In random sampling without replacement, $k$ {\em distinct} indices $i_1,\ldots,i_k$ within $\setn$ are chosen uniformly at random, where $k$ is some parameter, and the relative Hamming weight of $\vec{q}\idx{\set{i_1,\ldots,i_k}}$ is used as estimate for $\relwt{\vec{q}\idx{\bar{t}}}$.
Formally, this sampling strategy is given by $\Psi = (P_T,P_S,f)$ where \smash{$P_T(t) = 1/\binom{n}{k}$} 
if $|t| = k$ and else $P_T(t) = 0$, ${\cal S} = \set{\perp}$ and thus $P_S(\perp) = 1$, and $f(t,\qt,\perp) = f(t,\qt) = \relwt{\qt}$.\hfill$\diamond$ 
\end{example}
\noindent 
With the second example, we show that also sampling with replacement is captured by our definition. 

\begin{example}[Random sampling {\em with} replacement]\label{ex:withreplacement}
In random sampling with replacement, $k$ indices $i_1,\ldots,i_k$ are chosen independently uniformly at random within $\setn$, where $k$ is some parameter, and the relative Hamming weight of the string $(q_{i_1},\ldots,q_{i_k})$ is used as estimate for $\relwt{\vec{q}\idx{\bar{t}}}$. Note that here $i_{\ell}$ may coincide with $i_{\ell'}$ for $\ell \neq \ell'$, in which case $(q_{i_1},\ldots,q_{i_k})$ is not equal to $\vec{q}\idx{\set{i_1,\ldots,i_k}}$. 
To make this fit into \refdef{SE}, we set $t$ to be $\set{i_1,\ldots,i_k}$, and we let $f(t,\qt,s)$ be given by $\relwt{q_{j_1},\ldots,q_{j_k}}$, where $j_1,\ldots,j_k$ is determined by the seed $s$ among all possibilities with $\set{j_1,\ldots,j_k} = t$. 
It is cumbersome and of no importance to us to determine the correct distributions $P_T$ and $P_S$ for $t$ and $s$, respectively; it is sufficient to realize that random sampling with replacement {\em is} captured by \refdef{SE}. 
\hfill$\diamond$ 
\end{example}

\noindent Next, we sample by picking a uniformly random subset (without restricting its size). 
\begin{example}[Uniformly random subset sampling]
\label{ex:coinflip}
The sample set $t$ is chosen as a uniformly random subset of $\setn$, and the estimate is computed as the relative Hamming weight of the sample $\vec{q}\idx{t}$. Formally, $P_T(t) = 1/2^n$ for any $t \subseteq \setn$, and ${\cal S} = \set{\perp}$ and $f(t,\qt,\perp) = f(t,\qt) = \relwt{\qt}$.\hfill$\diamond$ 
\end{example}

\noindent 
As a fourth example, we consider a somewhat unnatural and in some sense non-optimal sampling strategy. This example, though, will be of use in our analysis of quantum oblivious transfer in \refsec{QOT}.

\newcommand{\exname}{Random sampling without replacement, using only part of the sample}
\begin{example}[\exname] 
\label{ex:QOT}
This example can be viewed as a composition of \refex{withoutreplacement} and \ref{ex:coinflip}. Namely, $t$ is chosen as a random subset of fixed size $k$, as in \refex{withoutreplacement}, so that $P_T(t) = 1/\binom{n}{k}$ for $t \subset \setn$ with $|t| = k$. But now, only part of the sample $\vec{q}\idx{t}$ is used to compute the estimate. Namely, the estimate is computed as 
\[
\func = \relwt{\vec{q}\idx{s}}.
\]
where the seed $s$ is chosen as a uniformly random subset $s$ of $t$; i.e., $P_S(s) = 1/2^t$ for any $s \subseteq t$. 
Recall from \refrem{container} that the choice of $s$ is allowed to depend on $t$. 
We would like to point out that when we use \refex{QOT} in Section~\ref{sec:QOT}, it is useful that the restriction to the subset $s$ is part of the evaluation of $f$, rather than part of the selection of the sample subset $t$. 
\hfill$\diamond$ 
\end{example}
\newcommand{\qkdexname}{Pairwise one-out-of-two sampling, using only part of the sample}
\ifthenelse{\boolean{includeFifthEx}}{
\noindent
In the fifth example we consider another somewhat unnatural sampling strategy, which though will be useful for the QKD proof in \refsec{QKD}.
\begin{example}[\qkdexname]\label{ex:qkdsestrat}
For this example, it is convenient to consider the index set from which the subset $t$ is chosen, to be of the form $\setn \times \set{0,1}$. Namely, we consider the string $\vec{q}~\in~\A^{2n}$ to be indexed by {\em pairs} of indices, $\vec{q} = (q_{ij})$, where $i \in \setn$ and $j \in \set{0,1}$; in other words, we consider $\vec{q}$ to consist of $n$ pairs $(q_{i0},q_{i1})$.
\qkdsestrat 
\hfill$\diamond$ 
\end{example}}{}
\newcommand{\biasexamplename}{Pairwise \emph{biased} one-out-of-two sampling, using only part of the sample\xspace}
\begin{example}[\biasexamplename]
\label{ex:bias}
In this example we consider a similar situation as in \refex{qkdsestrat}, except that we now construct $t$ by sampling every $j_k$ according to the \emph{Bernoulli} distribution $(p,1\!-\!p)$. Consequently, we compute the estimate for \relwt{\vec{q}\idx{\bar{t}}} slightly differently, but we will make this clear in \refapp{ex6}. \hfill$\diamond$ 
\end{example}

\subsection{The Error Probability}
After having introduced the general notion of a sampling strategy, we next want to define a measure that captures for a given sampling strategy how well it performs, i.e., with what probability the estimate, $f(t,\vec{q}\idx{t},s)$, is how close to the real value, $\relwt{\vec{q}\idx{\bar{t}}}$. For the definition, it will be convenient to introduce the following notation. For a given sampling strategy $\Psi = (P_T,P_S,f)$, consider arbitrary but fixed choices for the subset $t \subset \setn$ and the seed $s \in \cal S$ with $P_T(t) > 0$ and $P_S(s) > 0$. Furthermore, fix an arbitrary $\delta > 0$. Define $\B(\Psi) \subseteq \A^n$ as
\[
\B(\Psi) := \Set{\vec{b}\in \A^n}{ \left| \relwt{\vec{b}\idx{\bar{t}}} -  f(t,\vec{b}\idx{t},s) \right| < \delta } \, ,
\]
i.e., as the set of all strings $\vec{q}$ for which the estimate is $\delta$-close to the real value, assuming that subset $t$ and seed $s$ have been used. To simplify notation, if $\Psi$ is clear from the context, we simply write $\B$ instead of $\B(\Psi)$. 
By replacing the specific values $t$ and $s$ by the corresponding (independent) random variables $T$ and $S$, with distributions $P_T$ and $P_S$, respectively, we obtain the {\em random variable} $B_{T,S}^{\delta}$, whose range consists of subsets of $\A^n$. 
By means of this random variable, we now define the {\em error probability} of a sampling strategy as follows. 

\begin{definition}[Error probability]\label{def:class-err-prob}
The {\em (classical) error probability} of a sampling strategy $\Psi = (P_T,P_S,f)$ is defined as the following value, parametrized by $0 < \delta < 1$: 
\[
\Cep := \max_{\vec{q} \in \A^n} \Pr\Bigl[\vec{q} \notin B_{T,S}^{\delta}(\Psi) \Bigr] \, .
\]
\end{definition}

\noindent
By definition of the error probability, it is guaranteed that for any string $\vec{q} \in \A^n$, the estimated value is $\delta$-close to the real value except with probability at most $\Cep$. 
When used as a sampling strategy to test closeness to the all-zero string, $\Cep$ determines the probability of accepting even though $\vec{q}_{\bar{t}}$ is ``not close" to the all-zero string, in the sense that its relative Hamming weight exceeds $\delta$. Whenever $\Psi$ is clear from the context, we will write \cep instead of \Cep.

In Appendix~\ref{app:EPs}, we analyze the error probabilities for the sampling strategies considered in Examples \ref{ex:withoutreplacement} to \ifthenelse{\boolean{includeFifthEx}}{\ref{ex:qkdsestrat}}{\ref{ex:QOT}}, excluding Example~\ref{ex:withreplacement}, and we show them all to be exponentially small by applying Hoeffding's inequality in a suitable way. 

\section{Sampling in a {\em Quantum} Population}
\label{sec:quantsampling}

We now want to study the behavior of a sampling strategy when applied to a quantum population. More specifically, let $A = A_1\cdots A_n$ be an $n$-partite quantum system, where the state space of each system $A_i$ equals $\H_{A_i} = \C^d$ with $d = |\A|$, and let $\set{\ket{a}}_{a \in \A}$ be a fixed orthonormal basis of $\C^d$. 
We allow $A$ to be entangled with some additional system $E$ with arbitrary finite-dimensional state-space $\H_E$. 
We may assume the joint state of $AE$ to be pure, and as such be given by a state vector $\ket{\varphi_{AE}} \in \H_A \otimes \H_E$; if not, then it can be purified by increasing the dimension of $\H_E$. 

Similar to the classical sampling problem of testing closeness to the all-zero string, we can consider here the problem of testing if the state of $A$ is close to the all-zero {\em reference state} $\ket{\varphi_A^\circ} = \ket{\co}\cdots\ket{\co}$ by looking at, which here means {\em measuring}, only a few of the subsystems of $A$. 
More generally, we will be interested in the sampling problem of estimating the ``Hamming weight of the state of $A$", although it is not clear at the moment what this should mean. 
Actually, like in the classical case, we are interested in testing closeness to the all-zero state, respectively estimating the Hamming weight, of the {\em remaining subsystems} of $A$. 

It is obvious that a sampling strategy $\Psi = (P_T,P_S,f)$ can be applied in a straightforward way to the setting at hand: sample $t$ according to $P_T$, measure the subsystems $A_i$ with $i \in t$ in basis $\set{\ket{a}}_{a \in \A}$ to observe $\vec{q}\idx{t} \in \A^{|t|}$, and compute the estimate as $f(t,\vec{q}\idx{t},s)$ for $s$ chosen according to $P_S$ (respectively, for testing closeness to the all-zero state, accept or reject depending on the value of the estimate). However, it is a-priori {\em not} clear, how to interpret the outcome. Measuring a random subset of the subsystems of $A$ and observing 0 all the time indeed seems to suggest that the original state of $A$, and thus the remaining subsystems, must be in some sense close to the all-zero state; but what is the right way to formalize this? In the case of a general sampling strategy for estimating the (relative) Hamming weight, what does the estimate actually estimate? And, do all strategies that perform well in the classical setting also perform well in the quantum setting? 

We give in this section a rigorous analysis of sampling strategies when applied to a $n$-partite quantum system~$A$. Our analysis completely answers above concerns. Later in the paper, we demonstrate the usefulness of our analysis of sampling strategies for studying and analyzing quantum-cryptographic schemes. 

\subsection{Analyzing Sampling Strategies in the Quantum Setting}

We start by suggesting the property on the remaining subsystems of $A$ that one should expect to be able to conclude from the outcome of a sampling strategy. A somewhat natural approach is as follows. 

\begin{definition}
For system $AE$, and similarly for any subsystem of $A$, we say that the state $\ket{\varphi_{AE}}$ of $AE$ has {\em relative Hamming weight $\beta$ within $A$} if it is of the form $\ket{\varphi_{AE}} = \ket{\vec{b}} \ket{\varphi_E}$ with $\vec{b} \in \A^n$ and $\relwt{\vec{b}} = \beta$. 
\end{definition}

\noindent
Now, given the outcome $f(t,\vec{q}\idx{t},s)$ of a sampling strategy when applied to $A$, we want to be able to conclude that, up to a small error, the state of the remaining subsystem $A\idx{\bar{t}}E$ is a {\em superposition} of states with relative Hamming weight close to $f(t,\vec{q}\idx{t},s)$ within $A\idx{\bar{t}}$. To analyze this, we extend some of the notions introduced in the classical setting. Recall the definition of $\B$, consisting of all strings $\vec{b}\in \A^n$ with  $\left| \relwt{\vec{b}\idx{\bar{t}}} -  f(t,\vec{b}\idx{t},s) \right| < \delta$. 
By slightly abusing notation, we extend this notion to the quantum setting and write
$$
\linspan\bigl(\B\bigr) := \linspan\bigl( \Set{\ket{\vec{b}}}{\vec{b} \in \B}\bigr) = \linspan\bigl(\Set{\ket{\vec{b}}}{\left| \relwt{\vec{b}\idx{\bar{t}}} -  f(t,\vec{b}\idx{t},s) \right| < \delta}\bigr) \, .
$$
Note that if the state $\ket{\varphi_{AE}}$ of $AE$ happens to be in $\linspan(\B) \otimes \H_E$ for some $t$ and $s$, and if exactly these $t$ and $s$ are chosen when applying the sampling strategy to $A$, then {\em with certainty} the state of $A\idx{\bar{t}}E$ (after the measurement) is in a superposition of states with relative Hamming weight $\delta$-close to $f(t,\vec{q}\idx{t},s)$ within $A\idx{\bar{t}}$, regardless of the measurement outcome $\vec{q}\idx{t}$. 

Next, we want to extend the notion of error probability (Definition~\ref{def:class-err-prob}) to the quantum setting. 
The following approach turns out to be fruitful. We consider the {\em hybrid} system $TSAE$, consisting of the classical random variables $T$ and $S$ with distribution $P_{TS} = P_T P_S$, describing the choices of $t$ and $s$, respectively, and of the actual quantum systems $A$ and $E$. The state of $TSAE$ is given by 
\[
\rho_{TSAE} = \sum_{t,s} P_{TS}(t,s) \proj{t,s} \kron  \proj{\varphi_{AE}} \, .
\]
Note that $TS$ is independent of $AE$: $\rho_{TSAE} = \rho_{TS} \otimes \rho_{AE}$; indeed, in a sampling strategy $t$ and $s$ are chosen independently of the state of $AE$. We compare this {\em real} state of $TSAE$ with an {\em ideal} state which is of the form 
\begin{equation}\label{eq:IdealState}
\tilde{\rho}_{TSAE} =  \sum_{t,s} P_{TS}(t,s) \proj{t,s} \kron \out{\tilde{\varphi}^{ts}_{AE}}{\tilde{\varphi}^{ts}_{AE}}
\quad\text{with}\quad
\ket{\tilde{\varphi}^{ts}_{AE}} \in \linspan(\B) \!\otimes\! \H_E \;\;\forall\; t,s
\end{equation}
for some given $\delta > 0$. Thus, $T$ and $S$ have the same distribution as in the real state, but here we allow $AE$ to depend on $T$ and $S$, and for each particular choice $t$ and $s$ for $T$ and $S$, respectively, we require the state of $AE$ to be in $\linspan(\B) \otimes \H_E$. Thus, in an ``ideal world" where the state of the hybrid system $TSAE$ is given by $\tilde{\rho}_{TSAE}$, it holds {\em with certainty} that the state $\ket{\psi_{A\idx{\bar{t}}E}}$ of $A\idx{\bar{t}}E$, after having measured $A\idx{t}$ and having observed $\vec{q}\idx{t}$, is in a superposition of states with relative Hamming weight $\delta$-close to $\beta:= f(t,\vec{q}\idx{t},s)$ within $A\idx{\bar{t}}$. We now define the quantum error probability of a sampling strategy by looking at how far away the closest ideal state $\tilde{\rho}_{TSAE}$ is from the real state $\rho_{TSAE}$. 

\begin{definition}[Quantum error probability]
The {\em quantum error probability} of a sampling strategy $\Psi = (P_T,P_S,f)$ is defined as the following value, parametrized by $0 < \delta < 1$: 
\[
\Qep = \max_{\H_E} \max_{\ket{\varphi_{AE}}}\min_{\tilde{\rho}_{TSAE}} \dist(\rho_{TSAE},\tilde{\rho}_{TSAE}) \, ,
\]
where the first $\max$ is over all finite-dimensional state spaces $\H_E$, the second $\max$ is over all state vectors $\ket{\varphi_{AE}} \in \H_A \otimes \H_E$, and the $\min$ is over all ideal states $\tilde{\rho}_{TSAE}$ as in \eqref{eq:IdealState}.%
\footnote{It is not too hard to see, in particular after having gained some more insight via the proof of Theorem~\ref{thm:quanterr} below, that these $\min$ and $\max$ exist. }
\end{definition}

\noindent 
As with \B and \cep, we simply write \qep when $\Psi$ is clear from the context.
We stress the meaningfulness of the definition: it guarantees that on average over the choice of $t$ and $s$, the state of $A\idx{\bar{t}}E$ is $\qep$-close to a superposition of states with Hamming weight $\delta$-close to $\func$ within $A\idx{\bar{t}}$, and as such it {\em behaves} like a superposition of such states, except with probability $\qep$. We will argue below and demonstrate in the subsequent sections that being (close to) a superposition of states with given approximate (relative) Hamming weight has some useful consequences. 

\begin{remark}\label{rem:allzero} 
Similarly to footnote~\ref{fn:zero-string}, also here the results of the section immediately generalize from the all-zero reference state $\ket{\co}\cdots\ket{\co}$ to an arbitrary reference state $\ket{\varphi_A^\circ}$ of the form $\ket{\varphi_A^\circ} = U_1 \ket{\co} \otimes \cdots \otimes U_n \ket{\co}$ for unitary operators $U_i$ acting on $\C^d$. Indeed, the generalization follows simply by a suitable change of basis, defined by the $U_i$'s. 
Or, in the special case where $\A = \set{0,1}$ and 
$$
\ket{\varphi_A^\circ} = H^{\vec{\hat{\theta}}}\ket{\vec{\hat{x}}} = H^{\hat{\theta}_1} \ket{\hat{x}_1} \otimes \cdots \otimes H^{\hat{\theta}_n} \ket{\hat{x}_n}
$$
for a fixed reference basis $\vec{\hat{\theta}} \in \set{0,1}^n$ and a fixed reference string $\vec{\hat{x}} \in \set{0,1}^n$, we can, alternatively, replace in the definitions and results the computational by the Hadamard basis whenever $\hat{\theta}_i = 1$, and speak of the (relative) Hamming distance to $\vec{\hat{x}}$ rather than of the (relative) Hamming weight. 
\end{remark}

\subsection{The Quantum vs. the Classical Error Probability}

It remains to discuss how difficult it is to actually {\em compute} the quantum error probability for given sampling strategies, and how the {\em quantum} error probability $\qep$ relates to the corresponding {\em classical} error probability $\cep$. To this end, we show the following simple relationship between $\qep$ and $\cep$. 

\begin{thm}
\label{thm:quanterr}
For any sampling strategy $\Psi$ and for any $\delta > 0$: 
\[
\Qep \leq \sqrt{\Cep}.
\]
\end{thm}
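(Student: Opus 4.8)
The plan is to fix an arbitrary purification $\ket{\varphi_{AE}} \in \H_A \otimes \H_E$ and construct an explicit ideal state $\tilde\rho_{TSAE}$ whose trace distance to $\rho_{TSAE}$ is at most $\sqrt{\cep}$; since the definition of $\qep$ takes a minimum over ideal states, this suffices. The natural candidate is obtained by projecting: for each fixed $t$ and $s$, let $P^{ts}$ denote the orthogonal projector onto $\linspan(\B) \otimes \H_E$, and set $\ket{\tilde\varphi_{AE}^{ts}} := P^{ts}\ket{\varphi_{AE}} / \| P^{ts}\ket{\varphi_{AE}}\|$ (dealing with the zero-norm case separately, if it arises, by picking any vector in the subspace). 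Then $\tilde\rho_{TSAE} := \sum_{t,s} P_{TS}(t,s)\,\proj{t,s}\otimes\proj{\tilde\varphi^{ts}_{AE}}$ is of the required form \eqref{eq:IdealState}.

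First I would use the fact, recalled in the excerpt, that for hybrid states with the same classical part $TS$, the trace distance decomposes as $\dist(\rho_{TSAE},\tilde\rho_{TSAE}) = \sum_{t,s} P_{TS}(t,s)\,\dist\bigl(\proj{\varphi_{AE}},\proj{\tilde\varphi^{ts}_{AE}}\bigr)$. Next, since both states in each term are pure, I would invoke the identity $\dist(\proj{\varphi},\proj{\psi}) = \sqrt{1-|\braket{\varphi}{\psi}|^2}$, also recalled in the excerpt. For the particular $\ket{\tilde\varphi^{ts}_{AE}}$ chosen above, $|\braket{\varphi_{AE}}{\tilde\varphi^{ts}_{AE}}|^2 = \| P^{ts}\ket{\varphi_{AE}}\|^2 = \bra{\varphi_{AE}} P^{ts} \ket{\varphi_{AE}}$, so the term equals $\sqrt{1 - \bra{\varphi_{AE}} P^{ts}\ket{\varphi_{AE}}} = \sqrt{\bra{\varphi_{AE}} (\I - P^{ts})\ket{\varphi_{AE}}}$. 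Writing $\epsilon_{ts} := \bra{\varphi_{AE}}(\I-P^{ts})\ket{\varphi_{AE}}$, the trace distance becomes $\sum_{t,s} P_{TS}(t,s)\sqrt{\epsilon_{ts}}$, and by Jensen's inequality (concavity of $\sqrt{\cdot}$) this is at most $\sqrt{\sum_{t,s} P_{TS}(t,s)\,\epsilon_{ts}}$.

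The remaining step is to identify $\sum_{t,s} P_{TS}(t,s)\,\epsilon_{ts}$ with the classical error probability of $\ket{\varphi_{AE}}$, and hence bound it by $\cep$. Here I would expand $\ket{\varphi_{AE}}$ in the computational basis on $A$ as $\ket{\varphi_{AE}} = \sum_{\vec b \in \A^n} \ket{\vec b}\ket{\varphi_E^{\vec b}}$ (unnormalized $\ket{\varphi_E^{\vec b}}$). Since $P^{ts}$ is the projector onto $\linspan(\Set{\ket{\vec b}}{\vec b \in \B})\otimes\H_E$, one gets $\bra{\varphi_{AE}}(\I - P^{ts})\ket{\varphi_{AE}} = \sum_{\vec b \notin \B} \|\ket{\varphi_E^{\vec b}}\|^2$. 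Averaging over $t,s$ and swapping the order of summation, $\sum_{t,s}P_{TS}(t,s)\,\epsilon_{ts} = \sum_{\vec b \in \A^n} \|\ket{\varphi_E^{\vec b}}\|^2 \cdot \Pr\!\bigl[\vec b \notin \Brv\bigr]$; since $\sum_{\vec b}\|\ket{\varphi_E^{\vec b}}\|^2 = 1$ (the state is normalized) and each $\Pr[\vec b\notin\Brv] \le \cep$ by definition of the classical error probability, this weighted average is at most $\cep$. Combining gives $\dist(\rho_{TSAE},\tilde\rho_{TSAE}) \le \sqrt{\cep}$, and taking the maximum over $\H_E$ and $\ket{\varphi_{AE}}$ yields $\qep \le \sqrt{\cep}$.

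I do not expect a serious obstacle; the one point requiring a little care is the possibility that $P^{ts}\ket{\varphi_{AE}} = 0$ for some $t,s$ (so the normalized projection is undefined) — but in that case $\epsilon_{ts} = 1$ and any unit vector in the subspace makes the bound in that term an equality, so the argument goes through unchanged. The conceptual heart of the proof is simply the combination of the pure-state trace-distance formula with Jensen's inequality, which is exactly where the square root enters.
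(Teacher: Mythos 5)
Your proposal is correct and follows essentially the same route as the paper's proof: the ideal state is built from the renormalized projections onto $\linspan(\B)\otimes\H_E$, the per-$(t,s)$ distance is computed via the pure-state formula $\sqrt{1-|\braket{\varphi}{\psi}|^2}$, Jensen's inequality brings the square root outside the average, and the averaged projection weight is identified with the probability that the computational-basis measurement outcome falls outside $B_{T,S}^{\delta}$, which is at most \cep. Your explicit treatment of the degenerate case $P^{ts}\ket{\varphi_{AE}}=0$ is a small point of extra care not spelled out in the paper, but the argument is otherwise identical.
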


\noindent
As a consequence of this theorem, it suffices to analyze a sampling strategy in the classical setting, which is much easier, in order to understand how it behaves in the quantum setting. In particular, sampling strategies that are known to behave well in the classical setting, like examples~\ref{ex:withoutreplacement} to~\ifthenelse{\boolean{includeFifthEx}}{\ref{ex:qkdsestrat}}{\ref{ex:QOT}}, are also automatically guaranteed to behave well in the quantum setting. We will use this in the application sections. 

Our bound on $\qep$ is in general tight. Indeed, in \refapp{tight} we show tightness for an explicit class of sampling strategies, which e.g.\ includes \refex{withoutreplacement} and \refex{qkdsestrat}. Here, we just mention the tightness result. 
\begin{prop}
\label{prop:natural}
There exist natural sampling strategies for which the inequality in \refthm{quanterr} is an equality.
\end{prop}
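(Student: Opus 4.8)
The plan is to prove the reverse inequality $\Qep \ge \sqrt{\Cep}$ for a class of ``symmetric'' sampling strategies that includes Examples~\ref{ex:withoutreplacement} and~\ref{ex:qkdsestrat}; combined with Theorem~\ref{thm:quanterr} this gives equality. Call $\Psi$ \emph{symmetric} if there is a finite group $G$ acting on the index set $\setn$ (and, compatibly, on the seed space $\mathcal S$) such that $f$ is $G$-equivariant and $P_{TS}$ is uniform on a single $G$-orbit of pairs $(t,s)$; here $G = S_n$ works for Example~\ref{ex:withoutreplacement}, and for Example~\ref{ex:qkdsestrat} one takes $G$ generated by permutations of the $n$ pairs together with the swaps inside each pair. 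The first step is to re-derive Theorem~\ref{thm:quanterr} in a form that exposes where it is lossy. For a pure input $\ket{\varphi_{AE}} = \sum_{\vec b} \ket{\vec b}\ket{\phi_{\vec b}}$ with $\ket{\phi_{\vec b}}\in\H_E$ (unnormalized), write $P_{ts}$ for the orthogonal projector onto $\linspan(\B)\otimes\H_E$. Using the trace-distance formula for hybrid states with identically distributed classical part, together with the fact that the closest ideal vector in $\linspan(\B)\otimes\H_E$ is the normalized image of $\ket{\varphi_{AE}}$ under $P_{ts}$, one gets $\min_{\tilde\rho}\dist(\rho_{TSAE},\tilde\rho_{TSAE}) = \sum_{t,s} P_{TS}(t,s)\sqrt{1-\|P_{ts}\ket{\varphi_{AE}}\|^2}$; Jensen's inequality together with $\sum_{t,s}P_{TS}(t,s)\|P_{ts}\ket{\varphi_{AE}}\|^2 = \sum_{\vec b}\|\phi_{\vec b}\|^2\Pr[\vec b\in B_{T,S}^{\delta}] \ge 1-\Cep$ then yields the bound $\sqrt{\Cep}$. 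Equality throughout forces: (a) $\ket{\phi_{\vec b}} \ne 0$ only for strings $\vec b$ attaining the worst-case error $\Pr[\vec b \notin B_{T,S}^{\delta}] = \Cep$, and (b) $\|P_{ts}\ket{\varphi_{AE}}\|$ is constant over $\mathrm{supp}(P_{TS})$.

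For a symmetric $\Psi$ I would then take $\H_E = \C$ and the uniform superposition $\ket{\varphi_A} = |\mathcal G|^{-1/2}\sum_{\vec b\in\mathcal G}\ket{\vec b}$, where $\mathcal G\subseteq\A^n$ is the $G$-orbit of a worst-case string. The error function $\vec q\mapsto\Pr[\vec q\notin B_{T,S}^{\delta}]$ is $G$-invariant — this follows from $G$-equivariance of $f$ (hence $g(\B) = B_{g(t,s)}^{\delta}$) and $G$-invariance of $P_{TS}$ — so it is constant on $\mathcal G$ with value $\Cep$, giving (a). For (b), $G$-invariance of $\mathcal G$ gives $|\mathcal G\cap B_{g(t,s)}^{\delta}| = |\mathcal G\cap \B|$, and transitivity of $G$ on $\mathrm{supp}(P_{TS})$ makes $\|P_{ts}\ket{\varphi_A}\|^2 = |\mathcal G\cap\B|/|\mathcal G|$ independent of $(t,s)$. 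A one-line averaging identity then shows that this constant $\|P_{ts}\ket{\varphi_A}\|^2$ equals $1-\Cep$, so the expression above evaluates to $\sqrt{\Cep}$, hence $\Qep = \sqrt{\Cep}$. It then remains to check the two examples lie in the class: for Example~\ref{ex:withoutreplacement} this is immediate ($G=S_n$, and $\mathcal G$ is the set of all strings of the worst-case Hamming weight), and for Example~\ref{ex:qkdsestrat} one tracks the action of the pair-permutation-and-swap group on transversals $t$ and on their size-$k$ subsets $s$ and checks it is transitive on such pairs.

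I expect the main obstacle to be isolating the axioms of ``symmetric strategy'' precisely enough that property (b) is genuinely guaranteed — in particular the compatibility between the $G$-action on seeds and the equivariance of $f$, and the transitivity of $G$ on $\mathrm{supp}(P_{TS})$. Everything downstream is the short Jensen-equality bookkeeping above; the only genuinely fiddly verification is that Example~\ref{ex:qkdsestrat}, with its two-stage sampling (pick one out of each pair, then sub-sample $k$ of the chosen positions), really does satisfy these axioms.
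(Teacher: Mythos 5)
Your proposal is correct and follows essentially the same route as the paper's proof in Appendix~\ref{app:tight}: both take a trivial environment $E$, a uniform superposition over the $G$-orbit of a worst-case string $\vec{q}^*$, and verify exactly the two equality conditions you isolate — that every string in the support is worst-case, and that the projection norm $\|P_{ts}\ket{\varphi_{A}}\|$ is constant over the support of $P_{TS}$ so that Jensen is tight. The only cosmetic difference is the axiomatization of symmetry (the paper phrases it as distributional equivalence to a fixed $(t_\circ,s_\circ)$ under a random permutation $\Pi\in G$, rather than $G$-equivariance of $f$ plus orbit-uniformity of $P_{TS}$), and both treatments give the same brief verification for Examples~\ref{ex:withoutreplacement} and~\ref{ex:qkdsestrat}.
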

\begin{proof}[Proof of \refthm{quanterr}] 
We need to show that for any $\ket{\varphi_{AE}} \in \H_A \otimes \H_E$, with arbitrary $\H_E$, there exists a suitable ideal state $\tilde{\rho}_{TSAE}$ with $\dist(\rho_{TSAE},\tilde{\rho}_{TSAE}) \leq (\cep)^{1/2}$. 
We construct $\tilde{\rho}_{TSAE}$ as in \refeq{IdealState}, where the $\ket{\tilde{\varphi}^{ts}_{AE}}$'s are defined by the following decomposition.
\[
\ket{\varphi_{AE}} = \inprod{\tilde{\varphi}^{ts}_{AE}}{\varphi_{AE}} \ket{\tilde{\varphi}^{ts}_{AE}} + \inprod{\tilde{\varphi}^{ts\perp}_{AE}}{\varphi_{AE}} \ket{\tilde{\varphi}^{ts\perp}_{AE}},
\]
with $\ket{\tilde{\varphi}^{ts}_{AE}} \in \linspan(\B) \otimes \H_E$, $\ket{\tilde{\varphi}^{ts\perp}_{AE}}\in \linspan(\B)^\perp \otimes \H_E$ and
$|\inprod{\tilde{\varphi}^{ts}_{AE}}{\varphi_{AE}}|^2+|\inprod{\tilde{\varphi}^{ts\perp}_{AE}} {\varphi_{AE}} |^2=1$.
In other words, $\ket{\tilde{\varphi}^{ts}_{AE}}$ is obtained as the re-normalized projection of $\ket{\varphi_{AE}}$ into $\linspan(\B) \otimes \H_E$. Note that $|\inprod{\tilde{\varphi}^{ts\perp}_{AE}}{\varphi_{AE}}|^2$ equals the probability $\Pr\bigl[\vec{Q} \!\notin\! \B\bigr]$, where the random variable $\vec{Q}$ is obtained by measuring subsystem $A$ of $\ket{\varphi_{AE}}$ in basis $\set{\ket{a}}_{a \in \cal A}^{\otimes n}$. Furthermore, 
$$
\sum_{t,s} P_{TS}(t,s) \, |\inprod{\tilde{\varphi}^{ts\perp}_{AE}}{\varphi_{AE}}|^2 =
\sum_{t,s} P_{TS}(t,s) \, \Pr\bigl[\vec{Q} \!\notin\! \B\bigr] =
\Pr\bigl[\vec{Q} \!\notin\! B_{T,S}^{\delta}\bigr]
= \sum_{\vec{q}} P_{\vec{Q}}(\vec{q})\, \Pr\bigl[\vec{q} \!\notin\! B_{T,S}^{\delta}\bigr] ,
$$ 
where by definition of $\cep$, the latter is upper bounded by $\cep$. From elementary properties of the trace distance, and using Jensen's inequality, we can now conclude that
\begin{align*}
\dist\bigl(\rho_{TSAE}&,\tilde{\rho}_{TSAE}\bigr) = \sum_{t,s}  P_{TS}(t,s) \dist\bigl(\proj{\varphi_{AE}}, \proj{\tilde{\varphi}^{ts}_{AE}}\bigr)  
= \sum_{t,s} P_{TS}(t,s) \sqrt{1 - |\inprod{\tilde{\varphi}^{ts}_{AE}}{\varphi_{AE}}|^2} \\
&= \sum_{t,s} P_{TS}(t,s) |\inprod{\tilde{\varphi}^{ts\perp}_{AE}}{\varphi_{AE}}|  
\leq \sqrt{\sum_{t,s} P_{TS}(t,s) |\inprod{\tilde{\varphi}^{ts\perp}_{AE}}{\varphi_{AE}}|^2} \leq \sqrt{\cep},
\end{align*}
which was to be shown. 
\end{proof}
As a side remark, we point out that the particular ideal state $\tilde{\rho}_{TSAE}$ constructed in the proof minimizes the distance to $\rho_{TSAE}$; this follows from the so-called Hilbert projection theorem. 

\subsection{Superpositions with a Small Number of Terms}

We give here some argument why being (close to) a superposition of states with a given approximate Hamming weight may be a useful property in the analyses of quantum-cryptographic schemes. 
For simplicity, and since this will be the case in our applications, we now restrict to the binary case where $\A = \set{0,1}$. 
Our argument is based on the following lemma, which follows immediately from Lemma~3.1.13 in~\cite{Renner05}; for completeness, we give a direct proof of Lemma~\ref{lem:minentsup} in Appendix~\ref{app:minentsup}. 
Informally, it states that measuring (part of) a {\em superposition} of a small number of orthogonal states produces a similar amount of uncertainty as when measuring the {\em mixture} of these orthogonal states. 

\begin{lemma}
\label{lem:minentsup}
Let $A$ and $E$ be arbitrary quantum systems, let $\set{\ket{i}}_{i \in I}$ and $\{\ket{w}\}_{w\in \mathcal{W}}$ be orthonormal bases of $\H_A$, and let $\ket{\varphi_{AE}}$ and $\rho_{AE}^\mathrm{mix}$ be of the form
\[
\ket{\varphi_{AE}} = \sum_{i \in J} \alpha_{i} \ket{i}\ket{\varphi_E^{i}} \in \H_A \otimes \H_E
\qquad\text{and}\qquad
\rho_{AE}^\mathrm{mix} = \sum_{i \in J} |\alpha_{i}|^2 \outs{i}\kron \outs{\varphi^{i}_E} 
\]
for some subset $J \subseteq I$. 
Furthermore, let $\rho_{WE}$ and $\rho_{WE}^\mathrm{mix}$ describe the hybrid systems obtained by measuring subsystem $A$ of \ket{\varphi_{AE}} and $\rho_{AE}^\mathrm{mix}$, respectively, in basis $\{\ket{w}\}_{w\in \mathcal{W}}$ to observe outcome $W$. Then, 
\[
\hmin{\rho_{WE}|E} \geq \hmin{\rho_{WE}^\mathrm{mix} | E} - \log |J| \, .
\]
\end{lemma}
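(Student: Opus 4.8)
The plan is to reduce the superposition case to the mixture case by a purification argument. The key observation is that the mixed state $\rho_{AE}^\mathrm{mix} = \sum_{i \in J} |\alpha_i|^2 \outs{i} \otimes \outs{\varphi_E^i}$ is exactly what one obtains from the pure state $\ket{\varphi_{AE}}$ by adjoining an auxiliary register $R$, writing down the purification $\ket{\Phi_{ARE}} = \sum_{i\in J}\alpha_i \ket{i}\ket{i}_R\ket{\varphi_E^i}$, and then tracing out $R$. Conversely, $\ket{\varphi_{AE}}$ is obtained from $\ket{\Phi_{ARE}}$ by discarding $R$ as well, but without first dephasing in the $\ket{i}$-basis. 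More precisely: if we measure $R$ of $\ket{\Phi_{ARE}}$ in the basis $\set{\ket{i}_R}$, we get the cq-state corresponding to $\rho_{AE}^\mathrm{mix}$ together with a classical copy of $i$; whereas if we simply trace out $R$, we recover $\ket{\varphi_{AE}}$.

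First I would fix the purification $\ket{\Phi_{ARE}}$ as above, with $\H_R$ of dimension $|J|$ (or $|I|$; only the states in $J$ carry weight). Then I would measure subsystem $A$ in the basis $\set{\ket{w}}_{w\in\mathcal{W}}$ in \emph{both} scenarios, obtaining a hybrid state $\rho_{WRE}$ with classical $W$. Measuring $R$ in addition (in the $\ket{i}_R$-basis) turns $\rho_{WRE}$ into the hybrid state one gets by measuring $A$ of $\rho_{AE}^\mathrm{mix}$ — call the relevant marginal $\rho_{WRE}'$ — so that $\rho_{WRE'}'$ restricted to $WE$ together with the classical $R$-register is precisely $\rho_{WE}^\mathrm{mix}$ sitting inside a larger system where $R$ is classical. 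On the other hand, tracing $R$ out of $\rho_{WRE}$ (before measuring it) yields exactly $\rho_{WE}$.

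Now I would invoke the two elementary min-entropy properties quoted in the excerpt. Since $\rho_{WE}$ is obtained from $\rho_{WRE}$ by discarding (a subsystem, which is a special case of measuring and forgetting) $R$, monotonicity under local operations on the conditioning system gives $\hmin{W|E} \geq \hmin{W|RE}$ — wait, that is the wrong direction; instead one uses that appending a classical system to the conditioning side can only help the adversary, i.e. the chain rule $\hmin{W|RE} \geq \hmin{WR|E} - \log|R| \geq \hmin{W|E} - \log|R|$, combined with $\hmin{W|RE} \geq \hmin{W|E}$ from the fact that $E$ arises from $RE$ by tracing out $R$. Applying this with $R$ the measured auxiliary register of dimension $|J|$, and noting that once $R$ is measured classically the conditional state is $\rho_{WE}^\mathrm{mix}$, we get $\hmin{\rho_{WE}^\mathrm{mix}|E} \leq \hmin{W|RE} + \log|J|$ on one side and $\hmin{W|RE} \leq \hmin{\rho_{WE}|E}$... hmm.

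Let me restate the clean version: set $\rho_{WRE}$ := state after measuring $A$ (in $\mathcal{W}$-basis) and $R$ (in $\set{\ket{i}_R}$-basis) of $\ket{\Phi_{ARE}}$. Then (i) $\hmin{\rho_{WRE}|RE} = \hmin{\rho_{WE}^\mathrm{mix}|E}$, because conditioned on $R=i$ the $WE$-part is exactly the measured version of $\outs{i}\otimes\outs{\varphi_E^i}$, and these are the blocks of $\rho_{WE}^\mathrm{mix}$; (ii) by the chain rule, $\hmin{\rho_{WRE}|RE} \leq \hmin{\rho_{WRE}|E} + \log|J|$; (iii) $\hmin{\rho_{WRE}|E} \leq \hmin{\rho_{WE}|E}$, since $\rho_{WE}$ is obtained from $\rho_{WRE}$ by forgetting $R$ — and forgetting a classical random variable that one no longer conditions on but which was entangled is exactly the statement that tracing out $R$ from $\ket{\Phi_{ARE}}$ (before measuring it) gives $\ket{\varphi_{AE}}$, hence measuring $A$ gives $\rho_{WE}$. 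Chaining (i)–(iii) yields $\hmin{\rho_{WE}^\mathrm{mix}|E} \leq \hmin{\rho_{WE}|E} + \log|J|$, which rearranges to the claim.

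The main obstacle is step (iii): one must be careful that ``measuring $R$ and then discarding the outcome'' is \emph{not} the same as ``discarding $R$'' at the level of the full state, because in the first case $E$ has been dephased through its correlation with $R$. The point is that $\hmin{\cdot|E}$ only cares about the reduced state on the conditioning system and the cq-correlations with $W$; and since $W$ is measured from $A$ while $R$ is a separate register, the presence or absence of the $R$-measurement does not change the marginal $\rho_{WE}$ obtained after tracing $R$ out. Making this precise — ideally by just exhibiting $\rho_{WE}$ directly as $\tr_R$ of $\ket{\Phi_{ARE}}$ followed by the $A$-measurement, and separately exhibiting the measured-$R$ version — and then checking that the two elementary min-entropy inequalities apply verbatim, is the only real content; everything else is bookkeeping. (Alternatively, as the excerpt notes, one can cite Lemma~3.1.13 of \cite{Renner05} directly and skip the construction entirely.)
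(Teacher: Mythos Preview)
Your purification approach has a genuine gap. The central error is the claim that tracing out $R$ from $\ket{\Phi_{ARE}} = \sum_{i\in J}\alpha_i\ket{i}_A\ket{i}_R\ket{\varphi_E^i}$ recovers $\ket{\varphi_{AE}}$. It does not: $\tr_R\bigl(\proj{\Phi_{ARE}}\bigr) = \sum_{i} |\alpha_i|^2\proj{i}\otimes\proj{\varphi_E^i} = \rho_{AE}^\mathrm{mix}$. The register $R$ purifies the \emph{mixture}; discarding it (whether or not you first measure it) always lands you on $\rho_{AE}^\mathrm{mix}$, never on the superposition. Consequently step~(iii) fails: ``forgetting $R$'' from your $\rho_{WRE}$ yields $\rho_{WE}^\mathrm{mix}$, not $\rho_{WE}$, so the chain never reaches $\hmin{\rho_{WE}|E}$ at all. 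Step~(i) is also wrong as stated: what is true is $\tr_R(\rho_{WRE})=\rho_{WE}^\mathrm{mix}$, which gives equality of $\hmin{W|E}$ on the two states, but $\hmin{W|RE}$ is in general strictly smaller, since an adversary who also holds $R$ can only guess $W$ better. A two-dimensional check exposes both failures: take $|J|=2$, trivial $E$, $\alpha_0=\alpha_1=1/\sqrt2$, and measurement basis $\{\ket{+},\ket{-}\}$. Then $\ket{\varphi_{A}}=\ket{+}$, so $\hmin{\rho_{WE}|E}=0$, while $\rho_A^\mathrm{mix}=\frac12\I$, so $\hmin{\rho_{WE}^\mathrm{mix}|E}=1$; but $\ket{\Phi_{AR}}$ is an EPR pair, hence after measuring $A$ the register $R$ determines $W$ and $\hmin{W|RE}=0\neq 1$, contradicting your~(i), and tracing out $R$ gives back the maximally mixed qubit, not $\ket{+}$, contradicting your~(iii).

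The paper's proof bypasses purification entirely and establishes the operator inequality $|J|\,\rho_{WE}^\mathrm{mix}\geq\rho_{WE}$ directly: one writes both operators out, reduces to product test vectors $\ket{v}\ket{\psi_E}$ via the block-diagonal structure in $W$, and applies the Cauchy--Schwarz inequality to the sum over $i\in J$. The min-entropy bound then follows immediately from the definition of $\hmin{\cdot|E}$. That Cauchy--Schwarz step is precisely what bridges the superposition and the mixture; it is the missing analytic ingredient that your data-processing and chain-rule manipulations cannot supply.
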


\noindent
We apply Lemma~\ref{lem:minentsup} to an $n$-qubit system $A$ where $\ket{\varphi_{AE}}$ is a superposition of states with relative Hamming weight $\delta$-close to $\beta$ within $A$:%
\footnote{System $A$ considered here corresponds to the subsystem $A\idx{\bar{t}}$ in the previous section, after having measured $A\idx{t}$ of the ideal state.}
$$
\ket{\varphi_{AE}} = \sum_{\vec{b} \in \set{0,1}^n \atop |\relwt{\vec{b}} - \beta| \leq \delta} \ket{\vec{b}}\ket{\varphi_E^{\vec{b}}} \, .
$$
It is well known that $\big|\Set{\vec{b} \in \set{0,1}^n}{|\relwt{\vec{b}} - \beta| \leq \delta}\big| \leq\big|\Set{\vec{b} \in \set{0,1}^n}{\relwt{\vec{b}} \leq \beta + \delta}\big| \leq 2^{\hbin(\beta+\delta)n}$ for $\beta + \delta \leq \frac12$, 
 where the function $\hbin: [0,1] \rightarrow [0,1]$ is the {\em binary entropy function}, defined as $\hbin(p) = -\bigl( p \log(p) + (1-p) \log(1-p) \bigr)$ for $0 < p < 1$ and as $0$ for $p = 0$ or $1$.%
\footnote{There exists a corresponding upper bound for the cardinality of a $q$-ary Hamming ball (with arbitrary $q$), expressed in terms of the so-called $q$-ary entropy function; we do not elaborate on this here, since we now focus on the binary case. }

Since measuring qubits within a state $\ket{\vec{b}}$ in the {\em Hadamard} basis produces uniformly random bits, we can conclude the following. 
\begin{corollary}\label{cor:minentsup}
Let $A$ be an $n$-qubit system, let the state $\ket{\varphi_{AE}}$ of $AE$ be a superposition of states with relative Hamming weight $\delta$-close to $\beta$ within $A$, where $\delta + \beta \leq \frac12$, and let the random variable $\vec{X}$ be obtained by measuring $A$ in basis $H^{\vec{\theta}} \set{\ket{\co},\ket{\cl}}^{\otimes n}$ for $\vec{\theta} \in \set{0,1}^n$. Then 
$$
\hmin{\vec{X}|E} \geq \weight{\vec{\theta}} - \hbin(\beta+\delta) n \, .
$$
\end{corollary}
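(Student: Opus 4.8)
The plan is to apply Lemma~\ref{lem:minentsup} with the appropriate choice of index sets and bases, and then to combine the two ingredients already assembled just before the statement: the cardinality bound on the Hamming ball, and the fact that measuring a computational-basis state in the Hadamard basis yields uniform bits. Concretely, I would take the orthonormal basis $\set{\ket{i}}_{i \in I}$ of $\H_A$ in the lemma to be the computational basis $\set{\ket{\co},\ket{\cl}}^{\otimes n}$, indexed by $I = \set{0,1}^n$, and the measurement basis $\set{\ket{w}}_{w \in \mathcal{W}}$ to be $H^{\vec{\theta}}\set{\ket{\co},\ket{\cl}}^{\otimes n}$. The hypothesis that $\ket{\varphi_{AE}}$ is a superposition of states with relative Hamming weight $\delta$-close to $\beta$ within $A$ means precisely that $\ket{\varphi_{AE}}$ has the form required by the lemma with $J = \Set{\vec{b} \in \set{0,1}^n}{|\relwt{\vec{b}} - \beta| \leq \delta}$.

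First I would invoke Lemma~\ref{lem:minentsup} to get $\hmin{\vec{X}|E} \geq \hmin{\rho_{\vec{X}E}^{\mathrm{mix}}|E} - \log|J|$, where $\rho_{\vec{X}E}^{\mathrm{mix}}$ is the hybrid state obtained by measuring $A$ in basis $H^{\vec{\theta}}\set{\ket{\co},\ket{\cl}}^{\otimes n}$ on the mixture $\rho_{AE}^{\mathrm{mix}} = \sum_{\vec{b} \in J} |\alpha_{\vec{b}}|^2 \outs{\vec{b}} \kron \outs{\varphi_E^{\vec{b}}}$. Next I would bound the two terms on the right separately. For $\log|J|$, the cardinality estimate stated just above the corollary gives $|J| \leq 2^{\hbin(\beta+\delta)n}$ (valid since $\beta+\delta \leq \tfrac12$), hence $\log|J| \leq \hbin(\beta+\delta)n$. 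For the min-entropy of the mixture, the key observation is that for each fixed computational-basis string $\vec{b}$, measuring the qubits in the positions $i$ with $\theta_i = 1$ in the Hadamard basis produces $\weight{\vec{\theta}}$ uniformly random and independent bits, regardless of $\vec{b}$ and independent of $E$; the qubits with $\theta_i = 0$ are measured in their own basis and contribute a deterministic outcome given $\vec{b}$. Since this holds for every branch $\vec{b}$ of the mixture, it holds for the mixture itself, so the resulting classical outcome contains a uniform $\weight{\vec{\theta}}$-bit substring independent of $E$ (and of the rest of the outcome), whence $\hmin{\rho_{\vec{X}E}^{\mathrm{mix}}|E} \geq \weight{\vec{\theta}}$. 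Chaining the two bounds yields $\hmin{\vec{X}|E} \geq \weight{\vec{\theta}} - \hbin(\beta+\delta)n$, as claimed.

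The only step that needs a little care — and what I would regard as the main (minor) obstacle — is making the claim $\hmin{\rho_{\vec{X}E}^{\mathrm{mix}}|E} \geq \weight{\vec{\theta}}$ fully rigorous rather than merely intuitive. One clean way is to write $\vec{X} = (\vec{X}_{\Theta}, \vec{X}_{\bar\Theta})$, where $\Theta = \Set{i}{\theta_i = 1}$: conditioned on the branch $\vec{b}$ of the mixture, $\vec{X}_{\bar\Theta}$ is the constant $\vec{b}\idx{\bar\Theta}$ and $\vec{X}_{\Theta}$ is uniform on $\set{0,1}^{\weight{\vec\theta}}$, independent of $\vec{b}$ and of $E$. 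Hence in the overall cq-state, conditioned on $(\vec{X}_{\bar\Theta}, E)$ the variable $\vec{X}_{\Theta}$ is still uniform and independent, so $\hmin{\vec{X}_{\Theta} \vec{X}_{\bar\Theta}|E} \geq \hmin{\vec{X}_{\Theta}|\vec{X}_{\bar\Theta}E} = \weight{\vec\theta}$, using that conditioning on a classical system inside the min-entropy can only be handled via the chain rule / the elementary properties recalled in the preliminaries. Everything else is a direct substitution.
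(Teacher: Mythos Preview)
Your proposal is correct and follows exactly the route the paper takes: apply Lemma~\ref{lem:minentsup} with the computational basis as $\set{\ket{i}}$ and $H^{\vec\theta}\set{\ket{\co},\ket{\cl}}^{\otimes n}$ as the measurement basis, bound $\log|J|$ via the Hamming-ball estimate, and use that Hadamard-basis measurements on computational-basis states yield uniform bits to get $\hmin{\rho_{\vec{X}E}^{\mathrm{mix}}|E} \geq \weight{\vec\theta}$. The paper in fact only states these three ingredients in the text preceding the corollary and leaves the assembly to the reader; your write-up (including the explicit product form $\rho_{\vec{X}_\Theta \vec{X}_{\bar\Theta}E} = 2^{-\weight{\vec\theta}} \I_{\vec{X}_\Theta}\otimes \rho_{\vec{X}_{\bar\Theta}E}$ for the mixture) is a faithful and slightly more detailed expansion of the same argument.
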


\noindent
Consider now the following quantum-cryptographic setting. Bob prepares and hands over to Alice an $n$-qubit quantum system $A$, which ought to be in state $\ket{\varphi_A^\circ} = \ket{\co}\cdots\ket{\co}$. However, since Bob might be dishonest, the state of $A$ could be anything, even entangled with some system $E$ controlled by Bob. Our results now imply the following: Alice can apply a suitable sampling strategy to convince herself that the joint state of the remaining subsystem of $A$ and of $E$ is (close to) a superposition of states with bounded relative Hamming weight. From Corollary~\ref{cor:minentsup}, we can then conclude that with respect to the min-entropy of the measurement outcome, the state of $A$ behaves similarly to the case where Bob honestly prepares $A$ to be in state $\ket{\varphi_A^\circ}$. 
By Remark~\ref{rem:allzero}, i.e., by doing a suitable change of basis, the same holds if \smash{$\ket{\varphi_A^\circ} = H^{\vec{\hat{\theta}}}\ket{\vec{\hat{x}}}$} for arbitrary fixed $\vec{\hat{\theta}},\vec{\hat{x}} \in \set{0,1}^n$, where $\weight{\vec{\theta}}$ is replaced by the Hamming distance between $\vec{\theta}$ and $\vec{\hat{\theta}}$. 
We will make use of this in the applications in the upcoming sections. 

\section{Application I: Quantum Oblivious Transfer (QOT)}\label{sec:QOT}

\subsection{The Bennett \textit{et al.} QOT Scheme}

In a (one-out-of-two) {\em oblivious transfer}, OT for short, Alice sends two messages, $\vec{m}_0,\vec{m}_1 \in \set{0,1}^\ell$ to Bob. Bob may choose to receive one of the two message, $\vec{m}_c$. The security requirements demand that Bob learns no information on the other message, $\vec{m}_{1-c}$, while at the same time Alice remains ignorant about Bob's choice bit~$c$. 

Back in 1992, Bennett \etal proposed a quantum scheme for OT, i.e., a QOT scheme~\cite{bennett1992}. The scheme makes use of a {\em bit commitment} (BC), which at that point in time was believed to be implementable with unconditional security by a quantum scheme. Bennett \etal, however, merely claimed security of their scheme without providing any proof. In 1994, Mayers and Salvail proved the QOT scheme secure against a limited class of attacks~\cite{mayers1994}, and, subsequently, Yao presented a full security proof without limiting the adversary's capabilities~\cite{yao1995}. However, Yao's proof is lengthy and very technical, and thus hard to understand. Furthermore, security is phrased and proven in terms of {\em accessible information}, of which we now know that it is a too weak information measure to guarantee security as required. 

Here we show how our sampling-strategy framework naturally leads to a new security proof for Bennett \etal's QOT scheme. The new proof is simple and conceptually easy-to-understand, and security is expressed and proven by means of a security definition that is currently accepted to be ``the right one''. Furthermore, it allows for an explicit bound on the imperfection of the scheme for any set of parameters (number of transmitted qubits, length of messages etc.), rather than merely providing an asymptotic security claim. 
Nowadays, we of course know that BC (as well as QOT) cannot be implemented with unconditional security by means of a quantum scheme: QBC is impossible~\cite{Mayers97,lo1997}. As such QOT cannot be instantiated from scratch.  
Nevertheless, the existence of a QOT scheme based on a (hypothetical) BC is still an interesting result, since in the non-quantum world, a BC alone does {\em not} allow to implement OT. 

Below, we describe Bennett \etal's QOT scheme (with some minor modifications), which we denote as \QOT. 
Actually, \QOT corresponds to the {\em randomized} OT used within Bennett \etal's QOT scheme, where the messages $\vec{m}_0$ and $\vec{m}_1$, called $\vec{k}_0$ and $\vec{k}_1$ in \QOT, are not {\em input} by Alice (her input is empty:~$\perp$) but randomly produced during the course of the scheme and then {\em output} to Alice. The desired non-randomized OT is then obtained simply by one-time-pad encrypting Alice's input messages $\vec{m}_0$ and $\vec{m}_1$ with the keys $\vec{k}_0$ and $\vec{k}_1$, respectively. Security of the non-randomized OT follows immediately from the security of the randomized OT by the properties of the one-time-pad.

\QOT is parametrized by parameters $n,k,\ell \in \N$, where $n$ is the number of qubits communicated, $\ell$ the bit-length of the messages/keys $\vec{k}_0, \vec{k}_1$, and $k$ is the size of the ``test set'' $t$, which we require to be at most $n/2$. \QOT makes use of a universal hash function $g:{\cal R} \times \set{0,1}^n \rightarrow \set{0,1}^\ell$. For $\vec{x}' \in \set{0,1}^{n'}$ with $n' < n$, we define $g(r,\vec{x}')$ as $g(r,\vec{x})$ where $\vec{x} \in \set{0,1}^n$ is obtained from $\vec{x}'$ by padding it with sufficiently many $0$'s. 
Furthermore, the scheme makes use of a BC, which we assume to be an ideal BC functionality. Alternatively, at the cost of losing unconditional security against dishonest Alice, we may use a BC implementation that is perfectly binding and computationally hiding.%
\footnote{Note that we do not claim any kind of composability for this computational setting. In case of a perfectly hiding and computationally binding BC scheme, our techniques do not apply directly. A specific variant of the latter case (in which the BC is required to have some additional properties) is handled in~\cite{damgaard2009}.}
Finally, for simplicity, we assume a {\em noise-free} quantum channel. 
For the more realistic setting of noisy quantum communication, an error-correcting code can be applied in a similar fashion as in the original scheme; this will not significantly affect our proof. 
In the upcoming protocol%
\footnote{A {\em protocol} is an interactive algorithm between two (or in general more) entities, whereas a {\em scheme} in general may consist of several protocols (like for BC); since the cryptographic tasks considered in this article (QOT and QKD) ask for a single protocol, we use the terms {\em protocol} and {\em scheme} interchangeably.} 
descriptions, we make use of our convention to speak about a basis $\vec{\theta}$ (or $\vec{\hat{\theta}}\,$) in $\set{0,1}^n$ when we actually mean $H^{\vec{\theta}} \{ \ket{\co},\ket{\cl} \}^{\otimes n}$ (respectively \smash{$H^{\vec{\hat{\theta}}} \{ \ket{\co},\ket{\cl} \}^{\otimes n}$}). Protocol \QOT is shown below.
\begin{protocol}
\begin{enumerate}
\protoline{Preparation}{Alice chooses $\vec{x},\vec{\theta} \in \{0,1\}^n$ at random and sends the $n$ qubits $H^{\vec{\theta}}\ket{\vec{x}}$ to Bob. Bob selects $\vec{\hat{\theta}} \in \{0,1\}^n$ at random and measures the received qubits in basis $\vec{\hat{\theta}}$, obtaining $\vec{\hat{x}}\in\{0,1\}^n$.}
\protoline{Commitment}{\label{item:testset}Bob commits bit-wise to $\vec{\hat{\theta}}$ and $\vec{\hat{x}}$. Alice samples a random subset $t \subset \setn$ of cardinality $k$ and asks Bob to open the commitments to $\hat{\theta}_i$ and $\hat{x}_i$ for all $i\in t$. Alice verifies the opened commitments by checking that $\hat{x}_i = x_i$ whenever $\hat{\theta}_i = \theta_i$. She internally stores the outcome of this check, i.e. \texttt{accept} or \texttt{reject}, for later use in step 4.}
\protoline{Set partitioning}{Alice sends $\vec{\theta}$ to Bob. Bob partitions $\bar{t}$ into the subsets $I_c = \{i \in \bar{t}: \theta_i = \hat{\theta}_i\} $ and $I_{1-c} = \{i \in \bar{t}: \theta_i \ne \hat{\theta}_i\}$ and sends $I_0$ and $I_1$ to Alice.}
\protoline{Key extraction}{
Alice chooses a random $r \in \mathcal{R}$ and sends it to Bob. Bob computes $\vec{\hat{k}}_c = g(r,\vec{\hat{x}}\idx{I_c})$. In case of \texttt{accept}, Alice computes $\vec{k}_0$ and $\vec{k}_1$ as $\vec{k}_0 := g(r,\vec{x}\idx{I_{0}})$ and $\vec{k}_{1} := g(r,\vec{x}\idx{I_{1}})$. 
Otherwise, i.e. in case of \texttt{reject}, she sets $\vec{k}_0$ and $\vec{k}_1$ to random $\ell$-bit strings.}
\end{enumerate}
\caption{\QOT$\!(\perp;c)$}
\end{protocol} 

\noindent
It is trivial to see that for honest Alice and Bob: $\vec{\hat{k}}_c = \vec{k}_c$. Furthermore, security against dishonest Alice, who is trying to learn information on $c$, is easy to see and not the issue here: in case of a perfect BC functionality, Alice learns no information on $c$ no matter what she does; in case of a computationally hiding BC implementation, all information she obtains on $c$ is ``hidden within the commitments'', and thus computational security follows from the computational hiding property. 

Proving security against dishonest Bob is much more subtle, and is the goal of this section. Clearly, {\em if} Bob indeed measures the qubits in the preparation phase with respect to some choice $\vec{\hat{\theta}}$, then security is easy to see: no matter how he partitions $\bar{t}$ into $I_0$ and $I_1$, on at least one of $\vec{x}\idx{I_0}$ and $\vec{x}\idx{I_1}$ he has some lower bounded uncertainty, and privacy amplification finishes the job. The intuition is now that the commitment phase forces Bob to essentially measure all qubits with respect to some choice $\vec{\hat{\theta}}$, as otherwise he will get caught. However, proving this rigorously is non-trivial.

\subsection{The Security Proof}

For our proof of security against dishonest Bob, we first introduce a slightly modified version of the protocol, \QOTx, given below. 
\QOTx is only of proof-technical interest because it asks Alice to perform some actions that she could not do in practice. However, her actions are well-defined, and it follows from standard arguments that Bob's view of \QOT is exactly the same as of \QOTx. It thus suffices to prove security (against dishonest Bob) for \QOTx.


\QOTx is obtained from \QOT by means of the following two modifications. First, for every $i \in \setn$, instead of sending $H^{\theta_i}\ket{x_i}$, Alice prepares an EPR pair $A_i B_i$ of which she sends $B_i$ to Bob and measures $A_i$, at some later point in the protocol, in basis $\theta_i$ to obtain $x_i$. By elementary properties of EPR pairs, and since actions on different subsystems commute, this does not affect Bob's view of the protocol. 
Second, Alice measures her qubits $A\idx{t}$ within the test subset $t$ in {\em Bob's basis} $\vec{\hat{\theta}}_t$ (rather than in $\vec{\theta}_t$) to obtain~$\vec{x}_t$, but she still only verifies correctness of Bob's $\hat{x}_i$'s with $i \in t$ for which $\hat{\theta}_i = \theta_i$. 
Note that by assumption on the BC, the string $\vec{\hat{\theta}}$ to which Bob can open his commitments is uniquely determined at this point, and thus Alice's action is well-defined, although not doable in real life. 
This modification only influences Alice's bits $x_i$ for which $i \in t$ and $\hat{\theta}_i \neq \theta_i$; however, since these bits are not used in the protocol, it has no effect on Bob's view. 

\begin{protocol}
\caption{\QOTx$\!\!(\perp;c)$}
\begin{enumerate}
\protoline{Preparation}{Alice prepares $n$ EPR pairs of the form $(\ket{\co}\ket{\co}+\ket{\cl}\ket{\cl})/\sqrt{2}$, and sends one qubit of each pair to Bob, who proceeds as in the original scheme QOT to obtain $\vec{\hat{\theta}}$ and $\vec{\hat{x}}$. Alice chooses a random $\vec{\theta} \in \set{0,1}^n$, but she does not measure her qubits yet.}
\protoline{Commitment}{Bob commits to $\vec{\hat{\theta}}$ and $\vec{\hat{x}}$, and Alice chooses a random subset $t \subset \setn$ of cardinality $k$, as in \QOT. Next, Alice measures her qubits that are indexed by $t$ in {\em Bob's} basis $\vec{\hat{\theta}}_t$ to obtain~$\vec{x}_t$. Then, Alice sends $t$ to Bob and they proceed as in \QOT, meaning that Bob opens these commitments and Alice verifies them.}
\protoline{Set partitioning}{As in \QOT. Additionally, Alice measures her qubits corresponding to $I_0$ in basis $\vec{\theta}_{I_0}$ to obtain $\vec{x}\idx{I_0}$ and her qubits corresponding to $I_1$ in basis $\vec{\theta}_{I_1}$ to obtain $\vec{x}\idx{I_1}$.}
\protoline{Key extraction}{Exactly as in the original scheme \QOT.}
\end{enumerate}
\end{protocol}

\noindent
Our proof for the security of \QOTx, and thus of \QOT, against dishonest Bob follows quite easily from our treatment of sampling strategies from Section~\ref{sec:quantsampling}. The proof is given below, after the formal security statement in Theorem~\ref{thm:QOT}. 
We would like to point out that our security guarantee against dishonest Bob implies the security definition proposed and studied in~\cite{FS08arxiv} for (randomized) OT, which in particular implies sequential {\em composability} when used as a sub-routine in a classical outer protocol. Note that we do not claim any kind of composability here against dishonest Alice, although when using an ideal BC, sequential composability against dishonest Alice can be proven along similar lines as for QOT in the bounded quantum storage model (see e.g. the arXiv version of \cite{dfrss07}).

\begin{thm}[Security of \QOT]\label{thm:QOT}
Consider an execution of \QOT (respectively \QOTx) between honest Alice and dishonest Bob. Let $\vec{K}_0$ and $\vec{K}_1$ be the keys in $\set{0,1}^\ell$ output by Alice. 
Then, there exists a bit $c$ so that $\vec{K}_{1-c}$ is close to random-and-independent of Bob's view (given $\vec{K}_{c}$) in that for any $\epsilon,\delta > 0$: 
\begin{align*}
\dist\bigl( \rho_{\vec{K}_{1-c}\vec{K}_{c} E}&,{\textstyle\frac{1}{2^\ell}} \id \otimes \rho_{\vec{K}_{c} E} \bigr) \\
&\leq \frac12 \cdot 2^{-\frac12\big(\big( \frac14 - \frac{\epsilon}{2} - \hbin(\delta)\big)(n-k) - \ell \big)} + \sqrt 6 \exp \bigl(- \delta^2 k /100 \bigr) + 2 \exp\bigl(-2\epsilon^2(n-k)\bigr), 
\end{align*}
where $E$ denotes the quantum state output by Bob, and $\id$ the identity operator on $\C^{2^\ell}$. 
\end{thm}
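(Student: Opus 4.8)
The plan is to analyze \QOTx{} by viewing its commit-and-open phase as an instance of the sampling strategy from \refex{QOT} applied to a quantum population, and then to apply \refcor{minentsup} together with the privacy amplification theorem (\refthm{PA}). First I would set up the purified picture: after the commitment phase of \QOTx{}, Bob's committed basis $\vec{\hat\theta}$ and string $\vec{\hat{x}}$ are fixed, and the joint state is Alice's $n$ EPR halves $A = A_1\cdots A_n$ together with Bob's system $E$ (which includes his qubits $B$ and everything he has computed, in particular the commitments). Conditioned on $\vec{\hat\theta}$, I would change basis on each $A_i$ via $H^{\hat\theta_i}$ so that ``the all-zero reference state'' becomes ``agreeing with Bob's $\vec{\hat{x}}$'' (this is exactly the generalization allowed by \refrem{allzero}): Alice's test, namely measuring $A\idx{t}$ in Bob's basis $\vec{\hat\theta}\idx{t}$ and checking $\hat{x}_i = x_i$ on the subset $s = \{i \in t : \hat\theta_i = \theta_i\}$, is precisely the sampling strategy of \refex{QOT} (pick a random size-$k$ subset $t$, but only use the random sub-subset $s \subseteq t$ for the estimate), applied relative to reference string $\vec{\hat{x}}$. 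Here the seed $s$ is determined by $\vec{\theta}\idx{t}$, which Alice chooses independently, so this genuinely fits \refdef{SE} via the container-seed device of \refrem{container}.

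Next I would invoke \refthm{quanterr}: the quantum error probability of \refex{QOT} is at most $\sqrt{\cep}$, and the classical error probability of this strategy is bounded (by the computation in \refapp{EPs}, which yields something like $4\exp(-\tfrac13\delta^2 k)$ for $k \le n/2$) so $\qep \le 2\exp(-\tfrac16 \delta^2 k)$. Hence, except with trace distance $\qep$, the post-measurement state of $A\idx{\bar t}E$ is (in the rotated basis) a superposition of states of relative Hamming distance at most $\delta$ from $\vec{\hat{x}}\idx{\bar t}$ within $A\idx{\bar t}$; equivalently, ignoring the error, we may replace the real state by an ideal state with this property. Now comes the key combinatorial observation specific to OT: in the set-partitioning step Bob splits $\bar t$ into $I_0, I_1$, and there is a bit $c$ with $|I_{1-c}| \ge |\bar t|/2 = (n-k)/2$. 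On the indices $I_{1-c}$, Alice measures $A\idx{I_{1-c}}$ in basis $\vec{\theta}\idx{I_{1-c}}$, which on the rotated qubits differs from Bob's basis $\vec{\hat\theta}$ in exactly those positions — so this is, up to Hadamard rotations, a measurement in a basis that is ``complementary'' on all of $I_{1-c}$. Applying \refcor{minentsup} to the $(n-k)$-qubit subsystem (with the roles of computational/Hadamard swapped as in \refrem{allzero}), the min-entropy of $\vec{K}_{1-c}$'s preimage $\vec{x}\idx{I_{1-c}}$ given $E$ is at least $|I_{1-c}| - \hbin(\beta+\delta)(n-k)$ where $\beta$ is the observed estimate. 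To get an unconditional bound I would also need that $\beta$ itself is small except with small probability: since Alice's test on $s$ passes, $\beta$ equals the relative Hamming weight on $s$, which is $0$ in the honest case but for a cheating Bob I bound $\beta$ using a fresh Hoeffding-type argument — actually $\beta \le \epsilon$ except with probability $2\exp(-2\epsilon^2(n-k))$ comes from sampling the non-test part — giving $\hmin{\vec{x}\idx{I_{1-c}} | E} \ge (\tfrac14 - \tfrac\epsilon2 - \hbin(\delta))(n-k)$ after accounting for the factor $\tfrac12$ from $|I_{1-c}| \ge (n-k)/2$ and from splitting the entropy.

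Finally I would feed this min-entropy bound into privacy amplification (\refthm{PA}) with output length $\ell$: conditioning additionally on $\vec{K}_c$ costs at most $\ell$ bits by the chain rule (or one handles it by noting $\vec{K}_c$ is a function of at most $\ell$ bits Bob already influences), yielding the term $\tfrac12 \cdot 2^{-\frac12((\frac14 - \frac\epsilon2 - \hbin(\delta))(n-k) - \ell)}$. Adding the two error terms — the sampling error $\sqrt 6\exp(-\delta^2 k/100)$ (a clean rewrite of $2\sqrt{\cep}$ with the constants from \refapp{EPs}) and the bound $2\exp(-2\epsilon^2(n-k))$ on the event that $\beta > \epsilon$ — and using the triangle inequality for trace distance over the three ``bad'' events gives exactly the claimed bound. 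The main obstacle I anticipate is the bookkeeping around the set partitioning: one must be careful that the superposition structure guaranteed by the sampling framework survives Alice's \emph{later} measurements of $A\idx{I_0}$ and $A\idx{I_1}$ (it does, since those are measurements in a fixed basis on a subsystem, and \reflem{minentsup} is stated exactly for measuring part of such a superposition), and that the ``relative Hamming weight within $A\idx{\bar t}$'' bound $\beta + \delta$ on the full remaining string correctly restricts to the sub-block $I_{1-c}$ of size $\ge (n-k)/2$ — giving the crucial factor $\tfrac12$ that produces the $\tfrac14$ in the exponent. Getting the constant $1/100$ and the $\sqrt 6$ to match requires tracking the precise classical error bound from the appendix, but that is routine once the structure is in place.
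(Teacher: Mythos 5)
Your overall architecture --- model Alice's check as the sampling strategy of \refex{QOT}, pass to a superposition of low-Hamming-weight states via \refthm{quanterr}, extract min-entropy via \refcor{minentsup}, and finish with privacy amplification --- is exactly the paper's, and your treatment of the basis change (\refrem{allzero}) and of the seed $s=\set{i\in t:\theta_i=\hat\theta_i}$ via the container-seed device is right. However, the middle of your argument, where the exponent $(\frac14-\frac{\epsilon}{2}-\hbin(\delta))(n-k)$ and the error term $2\exp(-2\epsilon^2(n-k))$ are supposed to arise, has a genuine gap. You choose $c$ so that $|I_{1-c}|\ge(n-k)/2$ and then assert that Alice's basis $\vec{\theta}$ is complementary to Bob's $\vec{\hat{\theta}}$ on \emph{all} of $I_{1-c}$. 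The identity $I_{1-c}=\set{i\in\bar t:\theta_i\ne\hat\theta_i}$ holds only for an \emph{honest} Bob; a dishonest Bob sends an arbitrary partition and could, for instance, put every index with $\theta_i=\hat\theta_i$ into the larger block, in which case your chosen $\vec{X}_{1-c}$ has no entropy at all. The correct selection is by weight, not by size: Hoeffding's inequality applied to Alice's random $\vec{\theta}$ gives $\weight{\vec{\theta}\idx{\bar t}}\ge(\frac12-\epsilon)(n-k)$ except with probability $2\exp(-2\epsilon^2(n-k))$ --- this, and not any bound on an ``observed error rate $\beta$'', is the source of that error term (in QOT the estimate is forced to be $0$ on acceptance, so there is no $\beta$ to bound) --- and then for \emph{any} partition at least one block satisfies $\weight{\vec{\theta}\idx{I_{1-c}}}\ge\frac12(\frac12-\epsilon)(n-k)$. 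It is this count of complementary-basis positions, entering \refcor{minentsup} as the $\weight{\vec{\theta}}$ term, that produces $\frac14-\frac{\epsilon}{2}$; your route would instead yield $\frac12$ from $|I_{1-c}|$ and has no principled place for the $\epsilon$.

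A secondary issue: handling the conditioning on $\vec{K}_c$ via the chain rule would cost an additional $\ell$ bits of min-entropy and hence would not reproduce the stated bound (you would end up with $-2\ell$ in the exponent). The paper avoids this by applying \refcor{minentsup} with the conditioning system being the still-unmeasured qubits $A\idx{I_c}$ together with $E_\circ$, i.e., bounding $\hmin{\vec{X}_{1-c}|A\idx{I_c}E_\circ}$, and then using that measuring $A\idx{I_c}$ to obtain $\vec{X}_c$ (of which $\vec{K}_c$ is a function) cannot decrease the conditional min-entropy; no $\ell$ is lost. The remaining discrepancies in constants (your $\cep\le 4\exp(-\frac13\delta^2k)$ versus the appendix's $6\exp(-k\delta^2/50)$ for \refex{QOT}, which is what yields $\sqrt6\exp(-\delta^2k/100)$ after the square root) are, as you say, routine bookkeeping.
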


\noindent
On a high level, the proof is as follows. Alice's checking procedure can be understood as applying a sampling strategy to the qubits she holds. From this we obtain that (except with a small error) the joint state she shares with Bob is a superposition of states with small relative Hamming weight within her subsystem~$A\idx{\bar{t}}$. This implies that the joint state is a superposition of states with small relative Hamming weight also within \smash{$A\idx{I_{1-c}}$}, where $c \in \set{0,1}$ is chosen such that $\theta_i \neq \hat{\theta}_i$ for approximately half (or more) of the indices $i$ in $I_{1-c}$. It then follows from Corollary~\ref{cor:minentsup} that $\vec{x}\idx{I_{1-c}}$, obtained by measuring \smash{$A\idx{I_{1-c}}$} in basis $\vec{\theta}\idx{I_{1-c}}$, has high min-entropy, so that privacy amplification concludes the proof.  
The formal proof, which takes care of the details and keeps track of the error term, is given below.

\begin{proof}
We consider the state
\[
\ket{\varphi_{AE_\circ}} \in \H_{A_1} \kron \cdots \kron \H_{A_n} \kron \H_{E_\circ} ,
\]
shared between Alice and Bob, after Bob has committed to $\vec{\hat{\theta}}$ and $\vec{\hat{x}}$, but before Alice chooses the test subset $t$. $\ket{\varphi_{AE_\circ}}$ is obtained from the $n$ EPR-pairs by an arbitrary quantum operation (possibly involving measurements), applied only to Bob's part. Without loss of generality, we may assume that, given the commitments, the joint state is indeed pure. Furthermore, we consider the strings $\vec{\hat{\theta}}$ and $\vec{\hat{x}}$, to which Bob has committed. By the perfectly binding property, these are uniquely determined. For concreteness, and in order to have the notation fit nicely with Section~\ref{sec:quantsampling}, we assume $\vec{\hat{\theta}} = \vec{\hat{x}} = (0,\ldots,0) \in \set{0,1}^n$; however, by Remark~\ref{rem:allzero}, the very same reasoning works for any $\vec{\hat{\theta}}$ and $\vec{\hat{x}}$. 

The crucial observation now is that Alice's checking procedure within the commitment phase of \QOTx can be understood as applying a sampling strategy
to the state $\ket{\varphi_{AE_\circ}}$ in order to test closeness of $A$ to the all-zero state $\ket{\co} \cdots\ket{\co}$. Indeed, Alice chooses a random subset $t \subset \setn$ of cardinality $k$, measures $A\idx{t}$ (in the computational basis) to obtain $\vec{x}\idx{t}$, and decides whether to accept or reject based on $\vec{x}\idx{t}$; specifically, she takes a random subset $s \subseteq t$, given by $s = \Set{i \in t}{\theta_i = \hat{\theta}_i}$, and accepts if and only $x\idx{s} = 0$ for all $i \in s$. This is precisely the sampling strategy $\Psi$ studied in \refex{QOT}, adapted to test closeness to $\ket{\co} \cdots\ket{\co}$ by accepting if and only if $f(t,\vec{x}\idx{t},s)= 0$. Note that, by the random choices of the $\theta_i$'s, $s$ is indeed a random subset of $t$. 

Thus, we can conclude that at the end of the commitment phase, for any fixed $\delta > 0$, the joint state of $A\idx{\bar{t}} E_\circ$ has collapsed to a state $\ket{\psi_{A\idx{\bar{t}} E_\circ}}$ that is (on average over Alice's choice of $t$ and~$s$) $\qep$-close to being a superposition of states with relative Hamming weight at most $\delta$ within $A\idx{\bar{t}}$ (except when Alice rejects the test, but in that case she will output random and independent keys at the end of the protocol and the theorem trivially holds). We proceed by assuming that the state $\ket{\psi_{A\idx{\bar{t}} E_\circ}}$ {\em equals} a superposition of states with small relative Hamming weight, and we book-keep the error $\qep$.%
\footnote{It now follows immediately from Corollary~\ref{cor:minentsup} that $\hmin{X_0 X_1|E_\circ}$ is ``large'', where $X_0$ collects the bits obtained by measuring $A\idx{I_0}$ in basis $\theta\idx{I_0}$, and correspondingly for $X_1$. However, in the end we need that $\hmin{X_{1-c} |X_c E_\circ}$ is ``large'' for some $c$, which does {\em not} follow from the former. Because of that, we need to make a small detour. }
Recall that by Theorem~\ref{thm:quanterr} and \refex{QOT} (and its analysis in Appendix~\ref{app:QOT}), 
$$
\qep \leq \sqrt{\cep} \leq \sqrt 6 \exp\bigl(- k\delta^2 / 100 \bigr)\, .
$$

By the random choices of the $\theta_i$'s, it follows from Hoeffding's inequality (Theorem~\ref{thm:hoeffding}) that 
the Hamming weight of $\vec{\theta}\idx{\bar{t}}$ is lower bounded by $\weight{\vec{\theta}\idx{\bar{t}}} \geq (\frac12 -\epsilon)(n-k)$ except with probability at most \mbox{$2 \exp(-2\epsilon^2(n-k))$}.%
\footnote{Actually, for the one-sided bound, we could save the factor two in front of the $\exp$. }
In the sequel, we assume that the bound holds, and we book-keep the error. 
It follows that regardless of how Bob divides $\bar{t}$ into $I_0$ and $I_1$, there exists $c \in \set{0,1}$ such that $\weight{\vec{\theta}\idx{I_{1-c}}} \geq \frac12(\frac12 -\epsilon)(n-k)$ (if Bob is honest, then $c$ coincides with his input bit). 

By re-arranging Alice's qubits, we write the state $\ket{\psi_{A\idx{\bar{t}} E_\circ}}$ as $\ket{\psi_{A^{1-c} A^c E_\circ}}$, where $A^0 := A\idx{I_0}$ and $A^1 := A\idx{I_1}$. Since $\ket{\psi_{A\idx{\bar{t}} E_\circ}}$ is a superposition of states with Hamming weight at most $(n-k)\delta$ within $A\idx{t}$, it is easy to see that $\ket{\psi_{A^{1-c} A^c E_\circ}}$ is a superposition of states with Hamming weight at most $(n-k)\delta$ within $A^{1-c}$. 
Let the random variables $\vec{X}_{\!1-c}$ and $\vec{X}_{\!c}$ describe the outcome of measuring $A^{1-c}$ and $A^c$ in bases $\vec{\theta}\idx{I_{1-c}}$ and $\vec{\theta}\idx{I_c}$, respectively, and let $\rho_{\vec{X}_{1-c} \vec{X}_c E_\circ}$ be the corresponding hybrid state. We may think of $\rho_{\vec{X}_{1-c} \vec{X}_c E_\circ}$ being obtained by {\em first} measuring $A^{1-c}$, resulting in a hybrid state $\rho_{\vec{X}_{1-c} A^c E_\circ}$, and {\em then} measuring $A^c$; indeed, the order in which these measurements take place have no effect on the final state. 

We can now apply Corollary~\ref{cor:minentsup} to the hybrid state $\rho_{\vec{X}_{1-c} A^c E_\circ}$ obtained from measuring subsystem $A^{1-c}$ within $\ket{\psi_{A^{1-c} A^c E_\circ}}$ and conclude that 
$$
\hmin{\vec{X}_{1-c}|A^c E_\circ} \geq \weight{\vec{\theta}\idx{I_{1-c}}} - \hbin(\delta)\cdot |I_{1-c}| \geq \Big(\frac14 - \frac{\epsilon}{2} - \hbin(\delta)\Big)(n-k) \, .
$$
By a basic property of the min-entropy (``measuring only destroys information"), it follows that the same bound in particular holds for $\hmin{\vec{X}_{1-c}|\vec{X}_c E_\circ}$. 
Applying privacy amplification (Theorem~\ref{thm:PA}), incorporating the error-probabilities (expressed in terms of trace distance) obtained along the proof, and noting that Bob's processing of his information to obtain his final quantum state $E$ does not increase the trace-distance, concludes the proof. 
\end{proof}

\section{Application II: Quantum Key Distribution (QKD)}
\label{sec:QKD}

In quantum key distribution (QKD), Alice and Bob want to agree on a secret key in the presence of an adversary Eve.
Alice and Bob are assumed to be able to communicate over a quantum channel and over an authenticated classical channel.\footnote{If the classical channel between Alice and Bob is not authentic, then authenticity of the communication can still be achieved by information-theoretic authentication techniques, at the cost of requiring Alice and Bob to initially share a short secret key.} Eve may eavesdrop the classical channel (but not insert or modify messages), and she has full control over the quantum channel. 
The first and still most prominent QKD scheme is the famous BB84 QKD scheme due to Bennett and Brassard~\cite{BB84}. 

In this section, we show how our sampling-strategy framework leads to a simple security proof for the BB84 QKD scheme. Proving QKD schemes rigorously secure is a highly non-trivial task, and as such our new proof nicely demonstrates the power of the sampling-strategy framework. Furthermore, \omt{beyond being ``yet another'' security proof for BB84 QKD,} our new proof has some nice features.
For instance, it allows us to explicitly state (a bound on) the error probability of the QKD scheme for any given choices of the parameters. Additionally, our proof does not seem to take unnecessary detours or to make use of ``loose bounds'', and therefore we feel that the bound on the error probability we obtain is rather tight (although we have no formal argument to support this).
 
Our proof strategy can also be applied to other QKD schemes that are based on the BB84 encoding. For example, Lo \etal's QKD scheme\footnote{In this scheme, Alice and Bob bias the choice of the bases so that they measure a bigger fraction of the qubits in the same basis.} \cite{lo05} can be proven secure by following exactly our proof, except that one needs to analyze a slightly different sampling strategy, namely the one from \refex{bias}. On the other hand, it is yet unknown whether our framework can be used to prove e.g. the six-state QKD protocol \cite{bruss1998} secure.

Actually, the QKD scheme we analyze is the entanglement-based version of the BB84 scheme (as initially suggested by Ekert~\cite{Ekert91}). However, it is very well known and not too hard to show that security of the entanglement-based version implies security of the original BB84 QKD scheme. 

The entanglement-based QKD scheme, \QKD, is parametrized by the total number $n$ of qubits sent in the protocol and the number $k$ of qubits used to estimate the error rate of the quantum channel (where we require $k \leq n/2$). Additional parameters, which are determined during the course of the protocol, are the observed error rate $\beta$ and the number $\ell \in \N \cup \set{0}$ of extracted key bits. \QKD makes use of a universal hash function $g:{\cal R} \times \set{0,1}^{n-k} \rightarrow \set{0,1}^\ell$ and a linear binary error correcting code of length $n-k$ that allows to correct up to a $\beta'$-fraction of errors (except maybe with negligible probability) for some $\beta' > \beta$. The choice of how much $\beta'$ exceeds $\beta$ is a trade-off between keeping the probability that Alice and Bob end up with different keys small and increasing the size of the extractable key. We will write $m$ for the bit size of the syndrome of this error-correcting code. Protocol \QKD can be found below.
\newcommand{\preparation}{qubit distribution\xspace}
\newcommand{\Preparation}{Qubit distribution\xspace}
\newcommand{\keyextraction}{key distillation\xspace}
\newcommand{\Keyextraction}{Key distillation\xspace}

\begin{protocol}
\begin{enumerate}
\protoline{\Preparation}{Alice prepares $n$ EPR pairs of the form $(\ket{\co}\ket{\co}+\ket{\cl}\ket{\cl})/\sqrt{2}$, and sends one qubit of each pair to Bob, who confirms the receipt of the qubits. Then, Alice picks random $\vec{\theta} \in \set{0,1}^n$ and sends it to Bob, and Alice and Bob measure their respective qubits in basis $\vec{\theta}$ to obtain $\vec{x}$ on Alice's side respectively $\vec{y}$ on Bob's side.} 
\protoline{Error estimation}{Alice chooses a random subset $s \subset \setn$ of size $k$ and sends it to Bob. Then, Alice and Bob exchange $\vec{x}\idx{s}$ and $\vec{y}\idx{s}$ and compute $\beta := \relwt{\vec{x}\idx{s}\xor\vec{y}\idx{s}}$.}
\protoline{Error correction}{Alice sends the syndrome $syn$ of $\vec{x}\idx{\bar{s}}$ to Bob with respect to a suitable linear error correcting code (as described above). Bob uses $syn$ to correct the errors in $\vec{y}\idx{\bar{s}}$ and obtains $\vec{\hat{x}}\idx{\bar{s}}$. Let $m$ be the bit-size of $syn$.} 
\protoline{\Keyextraction}{Alice chooses a random seed $r$ for a universal hash function $g$ with range $\set{0,1}^\ell$, where $\ell$ satisfies $\ell < (1\!-\!\hbin(\beta))n-k-m$ (or $\ell = 0$ if the right-hand side is not positive), and sends it to Bob. Then, Alice and Bob compute $\vec{k} := g(r,\vec{x}\idx{\bar{s}})$ and $\vec{\hat{k}} := g(r,\vec{\hat{x}}\idx{\bar{s}})$, respectively.}  
\end{enumerate}
\caption{\QKD}
\label{pro:qkd}
\end{protocol}
It is not hard to see that $\vec{k} = \vec{\hat{k}}$ except with negligible probability (in $n$).
Furthermore, if no Eve interacts with the quantum communication in the \preparation phase then $\vec{x} = \vec{y}$ in case of a noise-free quantum channel, or more generally, $\relwt{\vec{x}-\vec{y}} \approx \phi$ in case the quantum channel is noisy and introduces an error probability $0 \leq \phi < \frac12$. It follows that $\beta \approx \phi$, so that using an error correcting code that approaches the Shannon bound, Alice and Bob can extract close to $(1-2\hbin(\phi))(n-k)$ bits of secret key, which is positive for $\phi$ smaller than approximately $11\%$. 
The difficult part is to prove security against an active adversary Eve. We first state the formal security claim. 

Note that we cannot expect that Eve has (nearly) no information on $\vec{K}$, i.e. that $\dist\bigl(\rho_{\vec{K}E},\frac{1}{|{\cal K}|}\id_{\vec{K}}\otimes\rho_E\bigr)$ is small, since the bit-length $\ell$ of $\vec{K}$ is not fixed but depends on the course of the protocol, and Eve can influence and thus obtain information on $\ell$ (and thus on $\vec{K}$). Theorem~\ref{thm:QKD} though guarantees that the bit-length $\ell$ is the {\em only} information Eve learns on $\vec{K}$, in other words, $\vec{K}$ is essentially random-and-independent of $E$ when given $\ell$.
\begin{thm}[Security of \QKD]\label{thm:QKD}
Consider an execution of \QKD in the presence of an adversary Eve. Let $\vec{K}$ be the key obtained by Alice, and let $E$ be Eve's quantum system at the end of the protocol. Let $\vec{\tilde{K}}$ be chosen uniformly at random of the same bit-length as $\vec{K}$. Then, for any $\delta$ with $\beta+\delta \leq \frac12$:
\[
\dist\bigl( \rho_{\vec{K} E^{ \atop }},\rho_{\vec{\tilde{K}} E} \bigr) 
  \leq \frac12 \cdot 2^{-\frac12\big(\big( 1 - \hbin(\beta+\delta)\big)n-k - m - \ell \big)} +
2 \exp \bigl(-\textstyle \frac16 \delta^2 k\bigr) \, .
\]
\end{thm}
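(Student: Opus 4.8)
The plan is to follow the same blueprint as the \QOT proof: purify the protocol, recognize the error-estimation phase as a sampling strategy, invoke \refthm{quanterr} together with \refcor{minentsup} to lower-bound the adversary's min-entropy on Alice's raw key, and finish with privacy amplification (\refthm{PA}). First I would pass to an entanglement-based and suitably ``delayed-measurement'' version of \QKD in which Alice holds all her qubits $A_1\cdots A_n$ as halves of EPR pairs and does not measure them until after $\vec{\theta}$ and the test subset $s$ are fixed; since Eve controls the quantum channel, the state $\ket{\varphi_{ABE}}$ shared by Alice, Bob, and Eve after qubit distribution is arbitrary (WLOG pure), and delaying Alice's measurements does not change Eve's view. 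The key conceptual step — this is ``the additional non-trivial observation'' promised in the introduction — is that although the protocol measures \emph{both} Alice's and Bob's qubits in basis $\vec{\theta}$ and compares, we want a statement purely about Alice's subsystem. The trick is to consider, instead of Alice's actual measurement in basis $\vec{\theta}$, the hypothetical measurement of her qubits in the \emph{complementary} basis $\bar{\vec\theta}$ (i.e. $1-\theta_i$ in each position): by a standard CNOT/EPR manipulation, the error pattern $\vec{x}\xor\vec{y}$ in basis $\vec\theta$ equals the value obtained by measuring a single composite $n$-qubit system (obtained by applying transversal CNOTs between Alice's and Bob's registers) in the complementary basis. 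Thus checking $\relwt{\vec{x}\idx{s}\xor\vec{y}\idx{s}}=\beta$ is exactly running the sampling strategy of \refex{withoutreplacement} (random subset of size $k$, estimate $=$ relative Hamming weight of the sample) on that composite system, with reference point shifted per \refrem{allzero}.

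Given that identification, \refthm{quanterr} plus the classical error bound for \refex{withoutreplacement} (Hoeffding, as in \refthm{hoeffding}; the relevant bound is $\cep\le 2\exp(-\delta^2 k/3)$ or similar, worked out in the appendix) tells us that, except with probability $\qep\le\sqrt{\cep}\le 2\exp(-\tfrac16\delta^2 k)$ (after taking the square root and bounding), the post-measurement joint state of the untested composite system and $E$ is a superposition of states with relative Hamming weight $\le\beta+\delta$ within the composite register. Undoing the CNOT manipulation, this says that the joint state of $A\idx{\bar s}$ (Alice's untested qubits), Bob's corresponding register, and $E$ is, up to that error, a superposition of eigenstates of the $\bar{\vec\theta}$-basis measurement on $A\idx{\bar s}$ whose ``distance-to-reference'' weight is at most $(\beta+\delta)(n-k)$. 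Now apply \refcor{minentsup}: measuring $A\idx{\bar s}$ in basis $\vec\theta$ — which is the \emph{complement} of $\bar{\vec\theta}$, hence maximally conjugate, so every $\theta_i$ contributes a ``$+1$'' to the $\weight{\vec\theta\idx{\bar s}}$ term — yields $\vec{X}\idx{\bar s}$ with $\hmin{\vec{X}\idx{\bar s}\,|\,E'}\ge (n-k) - \hbin(\beta+\delta)(n-k) = \bigl(1-\hbin(\beta+\delta)\bigr)(n-k)$, where $E'$ includes Eve's system and Bob's register.

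To reach the stated bound I then account for the classical side-information Eve gleans after the raw key is fixed: the $k$ revealed bits $\vec{x}\idx{s}$ (already excluded, since we conditioned on $A\idx{\bar s}$), the error-correction syndrome of $m$ bits, and the hash seed $r$ (independent, so harmless). By the chain rule for min-entropy, subtracting the syndrome costs at most $m$: $\hmin{\vec{X}\idx{\bar s}\,|\,\mathit{syn},E'}\ge \bigl(1-\hbin(\beta+\delta)\bigr)(n-k)-m$; note $(n-k)-\hbin(\beta+\delta)(n-k)-m \ge (1-\hbin(\beta+\delta))n-k-m$ after a trivial rearrangement, matching the exponent in the theorem. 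Privacy amplification (\refthm{PA}) with output length $\ell$ then gives trace distance at most $\tfrac12\cdot 2^{-\frac12((1-\hbin(\beta+\delta))n-k-m-\ell)}$ between $\rho_{\vec{K}\,r\,\mathit{syn}\,E'}$ and the ideal state; since $\ell$ is a deterministic function of the transcript and $\vec{\tilde K}$ is taken of matching length, this is exactly $\dist(\rho_{\vec K E},\rho_{\vec{\tilde K}E})$ up to the ideal/real substitution. Adding the $2\exp(-\tfrac16\delta^2 k)$ sampling error (carried as a trace-distance term via the standard ``behaves-like except with probability $\qep$'' argument) and using that Eve's further local processing cannot increase trace distance completes the proof. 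The main obstacle is the first paragraph's reduction: getting the ``measure-in-the-complementary-basis'' reformulation precisely right so that (i) it genuinely coincides with running \refex{withoutreplacement}, and (ii) the basis in which the raw key is actually extracted is the conjugate one, so that \refcor{minentsup} yields the full $\weight{\vec\theta\idx{\bar s}}=n-k$ worth of entropy rather than something smaller.
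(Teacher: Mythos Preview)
Your high-level architecture is right, and the CNOT idea is indeed the key. But the identification with \refex{withoutreplacement} is where the argument breaks. After applying CNOT to each pair $A_iB_i$, the parity $Z_i=X_i\oplus Y_i$ lives in $B_i$ when $\theta_i=0$ and in $A_i$ when $\theta_i=1$ (each measured in basis $\theta_i$), while the complementary qubit carries the ``value'' $W_i$. There is no single $n$-qubit register that (i) yields $\vec Z_s$ when its $s$-positions are measured in the reference basis \emph{and} (ii) leaves $A_{\bar s}$ as the remaining population. If you take the composite to be the parity register and run \refex{withoutreplacement} with sample $s$, the sampling conclusion constrains the \emph{untested parity qubits}, which sit in the environment as far as the key is concerned; it says nothing about the $W$-register from which the raw key actually comes. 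Your step ``undoing the CNOT, this says that the joint state of $A_{\bar s}$\ldots is a superposition of eigenstates of the $\bar{\vec\theta}$-basis measurement on $A_{\bar s}$'' is precisely the leap that does not go through.

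The paper's resolution is to treat the full $2n$-qubit system $AB$ (after CNOT) as the population and to use the sampling strategy of \refex{qkdsestrat}, not \refex{withoutreplacement}. The sample $t$ has size $n$ and selects, for each pair, the parity-side qubit (the choice being $\theta_i$); the reference basis is \emph{fixed}---Hadamard on all of $A$, computational on all of $B$---so each sampled qubit is automatically measured in its reference basis and yields $Z_i$; the estimate $\beta=\relwt{\vec Z_s}$ then uses only the random sub-subset $s\subset t$ of size $k$ (this is the ``using only part of the sample'' feature of \refex{qkdsestrat}). With this setup the remaining population $\bar t$ \emph{is} the $W$-register, so the framework directly gives that it is close to a superposition of states with relative Hamming weight near $\beta$, and \refcor{minentsup} (measurement in the conjugate of the reference basis at every position) yields $\hmin{\vec W\mid\vec\Theta\,\vec Z\,S\,E_\circ}\ge(1-\hbin(\beta+\delta))\,n$. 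One then passes from $\vec W$ to $\vec X$ (a bijection given $\vec\Theta,\vec Z$) and subtracts $k$ for $\vec X_{S}$ and $m$ for the syndrome via the chain rule. The error term $2\exp(-\tfrac16\delta^2 k)$ comes from the classical analysis of \refex{qkdsestrat}, not of \refex{withoutreplacement}.
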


\noindent
From an application point of view, the following question is of interest. Given the parameters $n$ and $k$, and given a course of the protocol with observed error rate $\beta$ and where an error-correcting code with syndrome length $m$ was used, what is the maximal size $\ell$ of the extractable key $\vec{K}$ if we want $\dist( \rho_{\vec{K} E^{ \atop }},\rho_{\vec{\tilde{K}} E}) \leq \epsilon$ for a given $\epsilon$? From the bound in Theorem~\ref{thm:QKD}, it follows that for every choice of $\delta$ (with $\beta+\delta \leq \frac12$), one can easily compute a possible value for $\ell$ simply by solving for $\ell$.  In order to compute the optimal value, one needs to maximize $\ell$ over the choice of $\delta$. 

The formal proof of Theorem~\ref{thm:QKD} is given below. Informally, the argument goes as follows. The error estimation phase can be understood as applying a sampling strategy. From this, we can conclude that the state from which the raw key, $\vec{x}\idx{\bar{s}}$, is obtained, is a superposition of states with bounded Hamming weight, so that Corollary~\ref{cor:minentsup} guarantees a certain amount of min-entropy within $\vec{x}\idx{\bar{s}}$. Privacy amplification then finishes the proof. 

To indeed be able to model the error estimation procedure as a sampling strategy, we will need to consider a modified but {\em equivalent} way for Alice and Bob to jointly obtain $\vec{x}\idx{s}$ and $\vec{y}\idx{s}$ from the initial joint state, which will allow them to obtain the {\sc xor}-sum $\vec{x}\idx{s}\xor\vec{y}\idx{s}$, and thus to compute $\beta$, {\em before} they measure the remaining part of the state, whose outcome then determines $\vec{x}\idx{\bar{s}}$. This modification is based on the so-called {\sc cnot} operation, $\CNOT$, acting on $\C^2 \otimes \C^2$, and its properties that
\begin{equation}\label{eq:cnot}
\CNOT(\ket{b}\ket{c}) = \ket{b}\ket{b\oplus c}
\qquad\text{and}\qquad
\CNOT(H\ket{b}H\ket{c}) = H\ket{b\oplus c}H\ket{c} \, ,
\end{equation}
where the first holds by definition of $\CNOT$, and the second is straightforward to verify.

\def\SYN{S\hspace{-0.15ex}Y\hspace{-0.5ex}N}
\begin{proof}
Throughout the proof, we use capital letters, $\vec{\Theta}$, $\vec{X}$ etc. for the {\em random variables} representing the corresponding choices of $\vec{\theta}$, $\vec{x}$ etc.\ in protocol \QKD.
Let the state, shared by Alice, Bob and Eve right after the quantum communication in the \preparation phase, be denoted by $\ket{\psi_{ABE_\circ}}$;\footnote{Note that $E_\circ$ represents Eve's quantum state just after the quantum communication stage, whereas $E$ represents Eve's entire state of knowledge at the end of the protocol (i.e., the quantum information and all classical information gathered during execution of \QKD).} without loss of generality, we may indeed assume the shared state to be pure. 
For every $i\in \setn$, Alice and Bob then measure the respective qubits $A_i$ and $B_i$ from $\ket{\psi_{ABE_\circ}}$ in basis $\Theta_i$, obtaining $X_i$ and $Y_i$.
This results in the hybrid state $\rho_{\vec{\Theta XY}E_\circ}$. 
For the proof, it will be convenient to introduce the additional random variables $\vec{W} = (W_1,\ldots,W_n)$ and $\vec{Z} = (Z_1,\ldots,Z_n)$, defined by
\begin{equation}\label{eq:WandZ}
Z_i := X_i \xor Y_i
\qquad\text{and}\qquad
W_i := \left\{\begin{array}{cl}
X_i & \text{if $\Theta_i = 0$} \\
Y_i & \text{if $\Theta_i = 1$} 
\end{array}\right. \, .
\end{equation}
Note that, when given $\vec{\Theta}$, the random variables $\vec{W}$ and $\vec{Z}$ are uniquely determined by $\vec{X}$ and $\vec{Y}$ {\em and vice versa}, and thus we may equivalently analyze the hybrid state $\rho_{\vec{\Theta WZ}E_\circ}$. 

For the analysis, we will consider a slightly {\em different} experiment for Alice and Bob 
to obtain the very {\em same} state $\rho_{\vec{\Theta WZ}E_\circ}$; the advantage of the modified experiment is that it can be understood as a sampling strategy. 
%
%
%
%
The modified experiment is as follows.
First, the \textsc{cnot} transformation is applied to every qubit pair $A_iB_i$ within $\ket{\psi_{ABE_\circ}}$ for $i\in \setn$, such that the state 
$\ket{\varphi_{ABE_\circ}} = (\CNOT^{\kron n} \kron \mathbb{I}_{E_\circ}) \ket{\psi_{ABE_\circ}}$ is obtained. 
Next, $\vec{\Theta}$ is chosen at random as in the original scheme, and for every $i \in \setn$ the qubit pair $A_i B_i$ of the transformed state is measured as in the original scheme depending on $\Theta_i$; however, if $\Theta_i = 0$ then the resulting bits are denoted by $W_i$ and $Z_i$, respectively, and if $\Theta_i = 1$ then they are denoted by $Z_i$ and $W_i$, respectively, such that which bit is assigned to which variable depends on $\Theta_i$. This is illustrated in Figure~\ref{fig:QKD} (left and middle), where light and dark colored ovals represent measurements in the computational and Hadamard basis, respectively.
It now follows immediately from the properties \eqref{eq:cnot} of the CNOT transformation and from the relation \eqref{eq:WandZ} between $\vec{X},\vec{Y}$ and $\vec{W},\vec{Z}$ that the state $\rho_{\vec{\Theta WZ}E_\circ}$ (or, equivalently, $\rho_{\vec{\Theta XY}E_\circ}$) obtained in this modified experiment is exactly the same as in the original. 

\begin{figure}[H]
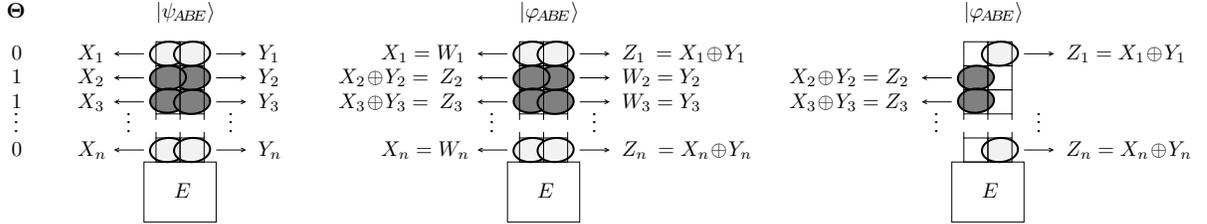

 \centering 
 \scalebox{0.8}{ \input{ps/EntanglementChecking1.pstex_t} }
 \qquad
 \scalebox{0.8}{ \input{ps/EntanglementChecking2.pstex_t} }
 \qquad\qquad\quad
 \scalebox{0.8}{ \input{ps/EntanglementChecking3.pstex_t} }
 \qquad\qquad
 \caption{Original and modified experiments for obtaining the same state $\rho_{\vec{\Theta WZ}E_\circ}$. }
 \label{fig:QKD} 
\end{figure}

An additional modification we may do without influencing the final state is to {\em delay} some of the measurements: we assume that first the qubits are measured that lead to the $Z_i$'s, and only at some later point, namely after the {\em error estimation} phase, the qubits leading to the $W_i$'s are measured (as illustrated in Figure~\ref{fig:QKD}, right). This can be done since the relative Hamming weight of $X_S \oplus Y_S$ for a random subset $S \subset \setn$ (of size $k$) can be computed given $\vec{Z}$ alone. 

The crucial observation is now that this modified experiment can be viewed as a particular sampling strategy $\Psi$, as a matter of fact as the sampling strategy discussed in Example~\ref{ex:qkdsestrat}, being applied to systems $A$ and $B$ of the state $\ket{\varphi_{ABE_\circ}}$. Indeed: first, a subset of the $2n$ qubit positions is selected according to some probability distribution, namely of each pair $A_i B_i$ one qubit is selected at random (determined by $\Theta_i$). Then, the selected qubits are measured to obtain the bit string $\vec{Z} = (Z_1,\ldots,Z_n)$. And, finally, a value $\beta$ is computed as a (randomized) function of $\vec{Z}$: $\beta = \relwt{\vec{Z}\idx{S}}$ for a random $S \subset \setn$ of size $k$. 
We point out that here the reference basis (as explained in~\refrem{allzero}) is not the computational basis for all qubits, but is the Hadamard basis on the qubits in system $A$ and the computational basis in system $B$; however, as discussed in Remark~\ref{rem:allzero}, we may still apply the results from Section~\ref{sec:quantsampling} (appropriately adapted). 

It thus follows that for any fixed $\delta>0$, the remaining state, from which $\vec{W}$ is then obtained, is (on average over $\vec{\Theta}$ and $S$) \qep-close to a state which is (for any possible values for $\vec{\Theta}$, $\vec{Z}$ and~$S$) a superposition of states with relative Hamming weight in a $\delta$-neighborhood of $\beta$. Note that the latter has to be understood with respect to the fixed reference basis (i.e., the Hadamard basis on $A$ and the computational basis on $B$). In the following, we assume that the remaining state {\em equals} such a superposition, but we remember the error
$$
\qep \leq \sqrt{\cep} \leq 2 \exp \bigl(-\textstyle \frac16 \delta^2 k\bigr) \, .
$$
where the bound on \cep is derived in \ifthenelse{\boolean{includeFifthEx}}{Appendix~\ref{app:qkdse}}{Appendix~\ref{app:qkdsestrat}}.

Recall that $\vec{W}$ is now obtained by measuring the remaining qubits; however, the basis used is opposite to the reference basis, namely the computational basis on the qubits $A_i$ and the Hadamard basis on the qubits $B_i$. 
Hence, by \refcor{minentsup} (and the subsequent discussion) we get a lower bound on the \minent of \vec{W}: 
\[
\hmin{\vec{W}|\vec{\Theta Z} S E_\circ}\geq (1 - \hbin(\beta + \delta)) n \, .
\]
Since $\vec{W}$ is uniquely determined by $\vec{X}$ (and vice versa) when given $\vec{\Theta}$ and $\vec{Z}$, the same lower bound also holds for $\hmin{\vec{X}|\vec{\Theta Z} S E_\circ}$. 
Note that in \QKD, the $k$ qubit-pairs that are used for estimating $\beta$ are not used anymore in the \keyextraction phase, so we are actually interested in the \minent of $\vec{X}\idx{\!\bar{S}}$. Additionally, we should take into account that Alice sends an $m$-bit syndrome $\SYN$ during the error correction phase. Hence, by using the chain rule, we obtain
\[
\hmin{\vec{X}\idx{\!\bar{S}}|\vec{\Theta Z} \vec{X}\idx{\!S} \SYN E_\circ} \geq (1 - \hbin(\beta + \delta)) n-k-m.\footnote{Probably, it is possible to prove the lower bound: $(1-\hbin(\beta+\delta))(n-k)-m$ using a different sampling strategy. However, for that case the error probability of the related classical sampling strategy becomes harder to analyze. We have chosen for the current proof strategy and bound for the sake of simplicity.} 
\]
Finally, we apply privacy amplification (\refthm{PA}) which concludes the proof.
\end{proof}


\section{Conclusion}

We have shown a framework for predicting some property (namely the approximate Hamming weight, appropriately defined) of a population of quantum states, by measuring a small sample subset. The framework allows for new and simple security proofs for important quantum cryptographic protocols: the Bennett \etal QOT and the BB84 QKD scheme. 
We find it particularly interesting that with our framework, the protocols for QOT and QKD can be proven secure by means of very similar techniques, even though they implement fundamentally different cryptographic primitives, and are intuitively secure due to very different reasons (namely in QOT the commitments force Bob to measure the communicated qubits, whereas in QKD Eve disturbs the communicated qubits when trying to observe them).

\section{Acknowledgments}
We would like to thank Dejan Dukaric for spotting a minor flaw in one of the sampling-strategy error-probability calculations, Dominique Unruh for his helpful comments regarding composability, as well as Severin Winkler for pointing out an issue related to abort in our QOT protocol. 

\bibliography{quantum}

\newcommand{\etalchar}[1]{$^{#1}$}
\begin{thebibliography}{DFR{\etalchar{+}}07}

\bibitem[BB84]{BB84}
Charles~H. Bennett and Gilles Brassard.
\newblock Quantum cryptography: Public key distribution and coin tossing.
\newblock In {\em Proceedings of IEEE International Conference on Computers,
  Systems, and Signal Processing}, pages 175--179, 1984.

\bibitem[BBCS92]{bennett1992}
Charles~H. Bennett, Gilles Brassard, Claude Cr\'{e}peau, and
  Marie-H\'{e}l\`{e}ne Skubiszewska.
\newblock Practical quantum oblivious transfer.
\newblock In {\em CRYPTO '91: Proceedings of the 11th Annual International
  Cryptology Conference on Advances in Cryptology}, pages 351--366, London, UK,
  1992. Springer-Verlag.

\bibitem[Bru98]{bruss1998}
Dagmar Bruss.
\newblock Optimal eavesdropping in quantum cryptography with six states.
\newblock {\em Physical Review Letters}, 81:3018, 1998.
\newblock \url{http://arxiv.org/abs/quant-ph/9805019}.

\bibitem[DFL{\etalchar{+}}09]{damgaard2009}
Ivan Damg{\aa}rd, Serge Fehr, Carolin Lunemann, Louis Salvail, and Christian
  Schaffner.
\newblock Improving the security of quantum protocols, 2009.
\newblock \url{http://arxiv.org/abs/0902.3918}.

\bibitem[DFR{\etalchar{+}}07]{dfrss07}
Ivan~B. Damg{\aa}rd, Serge Fehr, Renato Renner, Louis Salvail, and Christian
  Schaffner.
\newblock A tight high-order entropic quantum uncertainty relation with
  applications.
\newblock In {\em CRYPTO}, Lecture Notes in Computer Science, pages 360--378.
  Springer, 2007.
\newblock \href{http://arxiv.org/abs/quant-ph/0612014}{arXiv:quant-ph/0612014}.

\bibitem[Eke91]{Ekert91}
Artur~K. Ekert.
\newblock Quantum cryptography based on bell's theorem.
\newblock {\em Physical Review Letter}, 67(6):661--663, August 1991.

\bibitem[FS08]{FS08arxiv}
Serge Fehr and Christian Schaffner.
\newblock Composing quantum protocols in a classical environment.
\newblock \url{http://arxiv.org/abs/0804.1059}, 2008.

\bibitem[Hoe63]{hoeffding1963}
W.~Hoeffding.
\newblock Probability inequalities for sums of bounded random variables.
\newblock {\em Journal of the American Statistical Association},
  58(301):13--30, 1963.

\bibitem[LC97]{lo1997}
H.-K. {Lo} and H.~F. {Chau}.
\newblock Is quantum bit commitment really possible?
\newblock {\em Physical Review Letters}, 78:3410--3413, April 1997.

\bibitem[LCA05]{lo05}
H.-K. Lo, H.~F. Chau, and M.~Ardehali.
\newblock Efficient quantum key distribution scheme and a proof of its
  unconditional security.
\newblock {\em J. Cryptol.}, 18(2):133--165, 2005.

\bibitem[May97]{Mayers97}
Dominic Mayers.
\newblock Unconditionally secure quantum bit commitment is impossible.
\newblock {\em Physical Review Letters}, 78(17):3414--3417, April 1997.

\bibitem[MS94]{mayers1994}
Dominic Mayers and Louis Salvail.
\newblock Quantum oblivious transfer is secure against all individual
  measurements.
\newblock In {\em Proceedings of the Third Workshop on Physics and Computation
  --- PhysComp '94}, pages 69--77. IEEE Computer Society Press, 1994.

\bibitem[Ren05]{Renner05}
Renato Renner.
\newblock {\em Security of Quantum Key Distribution}.
\newblock PhD thesis, ETH Z\"urich (Switzerland), September 2005.
\newblock \url{http://arxiv.org/abs/quant-ph/0512258}.

\bibitem[RK05]{RK05}
Renato Renner and Robert K\"onig.
\newblock Universally composable privacy amplification against quantum
  adversaries.
\newblock In {\em Theory of Cryptography Conference (TCC)}, volume 3378 of {\em
  Lecture Notes in Computer Science}, pages 407--425. Springer, 2005.

\bibitem[Ser74]{serfling74}
R.~J. Serfling.
\newblock Probability inequalities for the sum in sampling without replacement.
\newblock {\em The Annals of Statistics}, 2(1):39--48, 1974.

\bibitem[Yao95]{yao1995}
Andrew Chi-chih Yao.
\newblock Security of quantum protocols against coherent measurements.
\newblock In {\em Proceedings of 26th Annual ACM Symposium on the Theory of
  Computing}, pages 67--75, 1995.

\end{thebibliography}


\begin{appendix}

\section{A Two-Sided Version of Serfling's Bound}
\label{app:serfling}
Let $\vec{b} \in \set{0,1}^n$ be a bit string with relative Hamming weight $\mu = \relwt{\vec{b}}$. Let the random variables $Y_1, Y_2,\ldots, Y_k$ be obtained by sampling $k$ random entries from $\vec{b}$ {\em without replacement}. Then, a result by Serfling \cite{serfling74} says that for any $\delta > 0$, the random variable $\bar{Y}:=\frac{1}{k}\sum_i Y_i$ satisfies
\[
\Pr[ \bar Y -\mu > \delta] \leq \exp\Big( \frac{-2 \delta^2 k n}{n-k+1} \Big).
\]
The aim of this section is to prove that the one-sided bound above implies the two-sided bound 
\[
\Pr[|\bar Y -\mu| > \delta] \leq 2\exp\Big(  \frac{-2 \delta^2 k n}{n-k+1} \Big).
\]
\begin{proof}
\begin{align*}
\Pr[| \bar Y -\mu| > \delta] &=  \Pr[| \tfrac1k \sum_i Y_i -\mu| > \delta] \\&=  \Pr[ \tfrac1k \sum_i Y_i -\mu < -\delta] + \Pr[ \tfrac1k \sum_i Y_i -\mu > \delta]\\
&= \Pr[ \tfrac1k \sum_i (Y_i - 1)  - (\mu - 1) < -\delta] + \Pr[ \tfrac1k \sum_i Y_i -\mu > \delta]\\
&= \Pr[ \tfrac1k \sum_i (1- Y_i )  - (1- \mu ) > \delta] + \Pr[ \tfrac1k \sum_i Y_i -\mu > \delta]
\end{align*}
Note that the random variable $1-Y_i$ has mean $1-\mu$.
We can now apply Serfling's bound twice, yielding the claim.
\end{proof}
\section{Error Probabilities of the Example Sampling Strategies}
\label{app:EPs}

\subsection{Example 1 --- Random sampling \emph{without} replacement}
\label{app:ex1}
It follows immediately from \refthm{hoeffding} that the estimate is $\delta$-close to the relative Hamming weight $ \relwt{\vec{q}}$ of $\vec{q}$ except with probability at most $2\exp(-2\delta^2 k)$. However, we want to analyze closeness of the estimate to $\relwt{\vec{q}\idx{\bar{T}}}$ (still treating $T$ as a random variable). 
This can be derived easily as follows. We can write
$\relwt{\vec{q}} = \alpha \relwt{\qT} + (1-\alpha) \relwt{\qbT}$, 
where $\alpha := k/n$, and thus can see that
$$
\relwt{\vec{q}\idx{\bar{T}}} - \relwt{\vec{q}\idx{T}} = \frac{1}{1-\alpha}\Big(\relwt{\vec{q}} - \alpha \relwt{\qT} \Big) - \relwt{\vec{q}\idx{T}} = \frac{1}{1-\alpha}\Big(\relwt{\vec{q}} - \relwt{\qT} \Big)
$$
so that
\begin{align}
\nonumber \cep  &= \max_{\vec{q}} \Pr\Bigl[\vec{q} \notin B_{T,S}^\delta \Bigr] 
= \max_{\vec{q}} \Pr\bigl[\left| \relwt{\vec{q}\idx{\bar{T}}} - \relwt{\vec{q}\idx{T}}\right| \geq \delta \bigr] \\
&= \max_{\vec{q}} \Pr\bigl[\left| \relwt{\vec{q}} - \relwt{\vec{q}\idx{T}}\right| \geq (1\!-\!\alpha)\delta \bigr] 
\leq 2\exp\bigl(-2(1\!-\!\alpha)^2\delta^2 k\bigr).
\label{eq:allk}
\end{align}
Under assumption of $k\leq n/2$, we obtain a simple bound for the latter expression,
\begin{equation}
\cep \leq 2\exp\bigl(-2(1\!-\!\alpha)^2\delta^2 k\bigr)  \leq 2 \exp\bigl(\textstyle-\frac12\delta^2 k\bigr).
\label{eq:simple}
\end{equation}
We obtain the following bound if we use the bound from \cite{serfling74}:
\begin{align*}
\cep  &= \max_{\vec{q}} \Pr\bigl[\left| \relwt{\vec{q}} - \relwt{\vec{q}\idx{T}}\right| \geq (1\!-\!\alpha)\delta \bigr] 
\\&\leq 2\exp\bigl(-\textstyle\frac{2(1\!-\!\alpha)^2\delta^2 k n}{n-k+1}\bigr) 
= 2\exp\bigl( -\frac{2 k (n-k)^2 \delta ^2}{n (n-k+1)} \bigr) \leq 
2\exp\bigl( - \frac{\delta ^2 kn}{n+2} \bigr). 
\end{align*}
for $k\leq n/2$, because $-\frac{2 k (n-k)^2 \delta ^2}{n (n-k+1)} $ is convex in $k$, and $- \frac{\delta ^2 kn}{2+n}$ is linear in $k$ and equality holds at $k=0$ and $k=n/2$, hence it is a tight linear upper bound.
\subsection{Example 2 --- Random sampling \emph{with} replacement}
Computing the error probability for \refex{withreplacement} actually turns out to be tricky. 
Although, as in Example~1 above, \refthm{hoeffding} applies and guarantees that the estimate is likely to be close to $\relwt{\vec{q}}$, showing that the estimate is likely to be close to $\relwt{\qbT}$ seems to be non-trivial here. Since we make no further use of this example sampling strategy, we refrain from analyzing its error probability. 
 
\subsection{Example 3 --- Uniformly random subset sampling}
\label{app:ex3}
Note that for any fixed choice $k = |t|$, $t$ is obtained as in random sampling without replacement. Because $t$ is sampled uniformly at random, the expectation of $k$ is given by $E[k]=n/2$. Hence, by making use of Hoeffding's inequality, we can say that for $0< \beta <\frac12$, $\Pr [|\frac{k}{n}-\frac{1}{2}| \geq \beta]\leq 2 \exp (-2\beta^2 n)$.

Informally, the idea is to start off with an upper bound on \cep obtained in Appendix~\ref{app:ex1} (the case of sampling without replacement), and transform it into an upper bound that holds under the assumption that $k \in [(\frac12-\beta)n, (\frac12 + \beta)n]$.
Note that we cannot use the simple bound \refeq{simple} from Appendix~\ref{app:ex1}, because that result was obtained under the assumption that $k \leq n/2$, 
and here this assumption does not hold. Instead, we use bound \refeq{allk} from Appendix~\ref{app:ex1},
\begin{equation}
\label{eq:bnd}
\cep \leq 2 \exp \Big(-2\big(1-\tfrac{k}{n}\big)^2 \delta^2 k \Big)
\end{equation}
which \emph{does} hold for all $k \in \{0, \ldots, n\}$.

To get an upper bound for \refeq{bnd}, we replace the first occurrence of $k$ in that expression (in the numerator of the fraction) by an upper bound for $k$, and the second occurrence of $k$ by a lower bound for $k$. The upper and lower bound for $k$ are simply given by the (appropriate) boundary points of the interval $[(\frac12-\beta)n, (\frac12 + \beta)n]$. I.e.,
\[
2 \exp \Big(-2 n\delta^2 \Big(1- \frac{ (\frac12 +\beta ) n}{n}\Big)^2  (\tfrac12-\beta)\Big) = 2\exp\big(  -2 n \delta^2(\tfrac{1}{2}-\beta )^3 \big)
\]
To compute \cep, we use a union bound to combine the upper bound above, which holds under assumption that $k$ lies inside the previously defined interval, with the upper bound on the probability that $k$ does \emph{not} lie in this interval, 
\[
\cep \leq 2 \exp\Big(\!-2n\delta^2 \big(\tfrac12 - \beta\big)^3\Big)+ 2 \exp (-2\beta^2 n).
\]
Setting $\beta=\delta/4$ in the expression above yields $-n \delta^2 (2-\delta)^3 /32$ for the exponent of the first summand, and $-n\delta^2/8$ for the exponent of the second summand. Because $0<\delta<1$ (\refdef{class-err-prob}), a suitable upper bound for both exponents is $-n \delta^2 /32$.\footnote{Note that our goal is to find a short and simple expression, rather than finding the tightest bound.} This gives the following simpler bound, 
\begin{align*}
\cep &\leq  4 \exp(- n \delta^2/32).
\end{align*}

\subsection{Example 4 --- \exname{}}
\label{app:QOT}
From Appendix~\ref{app:ex1}, we know that
$\Pr\bigl[|\relwt{\qbT}-\relwt{\qT}| \geq \xi\bigr] \leq 2\exp(-\frac{1}{2}\xi^2k)$, for $k<n/2$. 
Additionally, the selection of the seed $s$ and the computation of $\func$ can be viewed as applying uniformly random subset sampling to $\vec{q}\idx{t}$. Hence, it follows from Appendix~\ref{app:ex3} that 
$\max_{\vec{q}} \Pr\bigl[|\relwt{\qT} - \relwt{\vec{q}\idx{S}}|\geq \gamma \bigr] \leq 4 \exp(- k \gamma^2 /32 )$.
Setting $\delta = \xi+\gamma$, and using triangle inequality and union bound, we 
obtain
\begin{align*}
\cep &= \max_{\vec{q}} \Pr\bigl[|\relwt{\vec{q}\idx{S}} -\relwt{\qbT}|\geq  \delta \bigr] 
\\&\leq \min_{0<\xi<\delta}\Bigl[\textstyle 2\exp\bigl(-\frac{1}{2}\xi^2k\bigr) + 4 \exp\bigl(- k(\delta-\xi)^2 / 32 \bigr)   \Bigr] \\ 
& \leq 6 \exp\bigl(- k\delta^2 / 50 \bigr),
\end{align*}
where the last inequality follows from setting $\xi = \delta / 5$ such that the two exponents coincide.
\ifthenelse{\boolean{includeFifthEx}}{
\subsection{Example 5 --- \qkdexname{}}
\label{app:qkdse}
\qkdseerror
}{}

\subsection{Example 6 --- \biasexamplename}
\label{app:ex6}
It will be convenient to define the index set $t$ as the union of two subsets, $t_0 \subset \setn \times \set{0}$ and $t_1 \subset \setn \times \set{1}$. 
Note that the complements of these subsets should now be understood as $\bar{t}_0 = (\setn \times \set{0}) \setminus t_0$ and $\bar{t}_1 = (\setn \times \set{1}) \setminus t_1$.
Let $t_0$ and $t_1$ be constructed as follows. We first sample a set $\tilde{t} \subset \setn $; for each element of $\setn$, we include it in $\tilde{t}$ with probability $p$. Then, $t_0 := \tilde{t} \times \set{0}$ and  $t_1 := (\setn \setminus \tilde{t}) \times \set{1}$. 
Like $t$, the seed $s$ is also defined as the union of two randomly chosen sets, $s = s_0 \cup s_1$, where $s_0 \subset t_0$ and $s_1 \subset t_1$.\footnote{Again, \refrem{container} applies.} These sets have fixed size; for a parameter $k\in \mathbb{N}$, $|s_0|= \frac{k}{2}$ and $|s_1|=\frac{k}{2}$.
Now, the estimate for \relwt{\vec{q}\idx{\bar{t}}} is computed as $f(t,\vec{q}_{t},s) = \frac1n \bigl( |\bar{t}_0|\ \relwt{\vec{q}_{s_0}} +  |\bar{t}_1|\ \relwt{\vec{q}_{s_1}} \bigr)$.

We need to show that $\relwt{\vec{q}\idx{\bar{T}}}$ is likely to be close to $\relwt{\vec{q}\idx{S}}$. Because we compute an estimate for \relwt{\vec{q}\idx{\bar{T}}} as a function of \relwt{\vec{q}\idx{S_0}} and \relwt{\vec{q}\idx{S_1}}, we will first show that (with high probability) $\relwt{\vec{q}\idx{T_0}} \approx \relwt{\vec{q}\idx{S_0}}$ and $\relwt{\vec{q}\idx{T_1}} \approx \relwt{\vec{q}\idx{S_1}}$. Then, we argue that $\relwt{\vec{q}\idx{\bar{T_0}}} \approx \relwt{\vec{q}\idx{T_0}}$ and $\relwt{\vec{q}\idx{\bar{T_1}}} \approx \relwt{\vec{q}\idx{T_1}}$, from which we can also conclude (using the union bound) that $\relwt{\vec{q}\idx{\bar{T_0}}} \approx \relwt{\vec{q}\idx{S_0}}$ and $\relwt{\vec{q}\idx{\bar{T_1}}} \approx \relwt{\vec{q}\idx{S_1}}$. Finally, we apply the union bound again and combine the two bounds to obtain an upper bound for $\Pr \bigl[| \relwt{\vec{q}_{\bar{T}}} - \frac1n( |\bar{T}_0|\ \relwt{\vec{q}_{S_0}} + |\bar{T}_1 |\  \relwt{\vec{q}_{S_1}}) | \geq \delta \bigr]$.

The first step in the proof 
follows directly from Hoeffding's inequality,
\begin{align*}
\Pr\bigl[ \left| \relwt{\vec{q}_{T_0}} - \relwt{\vec{q}\idx{S_0}} \right| \geq \gamma \bigr] \leq 2\exp\left(- 2 |S_0| \gamma^2 \right) =  2\exp\bigl(-  k \gamma^2 \bigr) ,\quad \text{for any }\gamma >0.
\end{align*}
Trivially, this bound also applies to the relation between \relwt{\vec{q}_{T_1}} and \relwt{\vec{q}\idx{S_1}}, if we substitute appropriately.
The second step, showing that $\relwt{\bar{T}_0}$ (respectively $\relwt{\bar{T}_1}$) is likely to be close to $\relwt{T_0}$ (resp. $\relwt{T_1}$), is slightly more involved. Namely, although the sum of the sizes of $T_0$ and $T_1$ is constant (to be precise, $|T_0|+|T_1|=n$), their individual sizes are random. In \refex{coinflip} (see also Appendix~\ref{app:ex3}), we have already encountered a similar, though not identical, situation, i.e., \refex{coinflip} considers uniformly random one-out-of-two sampling whereas here we analyze one-out-of-two sampling according to a Bernoulli $(p,1\!-\!p)$ distribution. 
Nonetheless, it is straightforward to generalize the proof of Appendix~\ref{app:ex3} to this (more general) case. 

Let $X:=|T_0|$. The expectation of $X$ is given by $E[X]=np$. 
Let $\mathcal{E}$ be the event that $X \in [(p - \beta)n , (p + \beta)n]$, for $ \beta>0 $. From Hoeffding's inequality, we known that $\Pr[\bar{\mathcal{E}} ] = \Pr [|\frac{X}{n} - p | \geq \beta]\leq 2 \exp (-2\beta^2 n)$. Like in Appendix~\ref{app:ex3}, we find an upper bound that holds conditioned on the event $\mathcal{E}$, by substituting the boundary points of the interval used to define $\mathcal{E}$ in \refeq{bnd},
\begin{align*}
\Pr \bigl[| \relwt{\vec{q}_{T_0}} - \relwt{\vec{q}_{\bar{T}_0}} | \geq \delta \ \big|\ \mathcal{E}\bigr] &\leq-2 (p-\beta)n \left(1-\frac{(p+\beta)n}{n}\right)^2 \\&= 2 \exp \bigl( -2 n\delta^2 (1-p-\beta)^2 (p-\beta) \bigr).
\end{align*}
Next, we apply the union bound to show that for $0<\epsilon<\gamma$ 
\begin{align*}
 \Pr\bigl[\left| \relwt{\vec{q}\idx{\bar{T}_0}} - \relwt{\vec{q}\idx{S_0}} \right| \geq \gamma \ \big|\ \mathcal{E} \bigr] & \leq 2 \exp \bigl(-2 n\epsilon^2 (1-p-\beta)^2 (p-\beta) \bigr) + 2\exp\left(-  k  (\gamma - \epsilon)^2 \right) 
\end{align*}
By substituting $p$ by $1-p$ in the expression above, we also obtain
\begin{align*}
 \Pr\bigl[\left| \relwt{\vec{q}\idx{\bar{T}_1}} - \relwt{\vec{q}\idx{S_1}} \right| \geq \gamma \ \big|\ \mathcal{E} \bigr] & \leq 2 \exp \bigl(-2 n\epsilon^2 (p-\beta)^2 (1-p-\beta) \bigr) + 2\exp\left(-  k  (\gamma - \epsilon)^2 \right) 
\end{align*}
Finally, we combine the two bounds and we get rid of the conditioning on $\mathcal{E}$ by adding $\Pr[\bar{\mathcal{E}}]$. For any $\delta > 0$ and $0 < \epsilon < \delta$, we may write
\begin{align*}
\cep &= \max_{\vec{q}}  \Pr \bigl[|  \relwt{\vec{q}_{\bar{T}}} - \frac1n( |\bar{T}_0|\ \relwt{\vec{q}_{S_0}} +|\bar{T}_1| \  \relwt{\vec{q}_{S_1}}) | \geq \delta \bigr]\\
&=\max_{\vec{q}} \Pr \bigl[| \weight{\vec{q}_{\bar{T}}} -  |\bar{T}_0|\ \relwt{\vec{q}_{S_0}} +|\bar{T}_1| \  \relwt{\vec{q}_{S_1}} | \geq n\delta \bigr]\\
&=\max_{\vec{q}} \Pr \bigl[| \weight{\vec{q}_{\bar{T}}} -  |\bar{T}_0|\ \relwt{\vec{q}_{S_0}} +|\bar{T}_1| \  \relwt{\vec{q}_{S_1}} | \geq (|\bar{T}_0| \delta +|\bar{T}_1| \delta )  \bigr]\\
&\leq \max_{\vec{q}}\Pr\bigl[\left| \relwt{\vec{q}\idx{\bar{T}_0}} - \relwt{\vec{q}\idx{S_0}} \right| \geq \delta \bigr] + \Pr\bigl[\left| \relwt{\vec{q}\idx{\bar{T}_1}} - \relwt{\vec{q}\idx{S_1}} \right| \geq \delta\bigr] \\
& \leq 2 \exp \bigl(-2 n\epsilon^2 (1-p-\beta)^2 (p-\beta) \bigr) + 2 \exp \bigl(-2 n\epsilon^2 (p-\beta)^2 (1-p-\beta) \bigr) + \ldots \\
& \phantom{\leq } + 4\exp\left(-  k  (\delta -\epsilon )^2 \right) + 2 \exp (-2\beta^2 n)
\end{align*}

\section{Proof of \reflem{minentsup}}\label{app:minentsup}

\begin{proof}
We will show that $|J| \rho_{WE}^\mathrm{mix} \geq \rho_{WE}$, to be understood in that $|J|\rho_{WE}^\mathrm{mix} - \rho_{WE}$ is positive semi-definite. With this shown, it then follows that for any density matrix $\sigma_E$ and for any non-negative $h \in \R$ 
$$
2^{-(h-\log|J|)} \cdot \id_W \kron \sigma_E - \rho_{WE} \geq
2^{-h} |J| \cdot \id_W \kron \sigma_E - |J|\rho_{WE}^\mathrm{mix} = |J| \big( 2^{-h} \cdot \id_W \kron \sigma_E - \rho_{WE}^\mathrm{mix} \big) 
$$
so that if the right-hand side is positive semi-definite then so is the left-hand side. 
The claimed bound $\hmin{\rho_{WE}|E} \geq \hmin{\rho_{WE}^\mathrm{mix}|E} - \log|J|$ then follows by the definition of the min-entropy. 

\noindent Writing out the measurements explicitly yields
\begin{align*}
\rho_{WE} &= \sum_{w\in \mathcal{W}} (\outs{w} \kron \id_E) \outs{\varphi_{AE}} (\outs{w} \kron \id_E)
= \sum_{w\in \mathcal{W}} \sum_{{i},{j} \in J} \alpha_{i} \bar{\alpha}_{j} \ket{w}\inprod{w}{{i}}\inprod{{j}}{w} \tvec{w} \kron \out{\varphi^{i}_E}{\varphi^{j}_E}
\end{align*}
and
\[
\rho_{WE}^\mathrm{mix} = \sum_{{i}\in J} |\alpha_{i}|^2 \sum_{w\in \mathcal{W}}|\inprod{w}{{i}}|^2\outs{w}\kron \outs{\varphi_E^{i}}.
\]

\noindent
We want to show that $\bra{\xi} (|J|\rho_{WE}^\mathrm{mix}-\rho_{WE}) \ket{\xi} \geq 0$ for all $\ket{\xi} \in \H_W \kron \H_E$. We first consider $\ket{\xi}$ of the special form $\ket{\xi} = \ket{v}\ket{\psi_E}$ with $v \in \mathcal{W}$, 
%
%
%
%
and compute/bound $\tvec{\xi} \rho_{WE} \ket{\xi}$ and $\tvec{\xi} \rho_{WE}^\mathrm{mix} \ket{\xi}$ as 
\begin{align*}
\tvec{\xi} \rho_{WE} \ket{\xi} &=\sum_{{i},{j} \in J} \alpha_{i} \bar{\alpha}_{j} \inprod{v}{{i}}\inprod{{j}}{v}  \inprod{\psi_E}{\varphi^{i}_E} \inprod{\varphi^{j}_E}{\psi_E} = \Big( \sum_{{i}\in J} \alpha_{i}  \inprod{v}{{i}}  \inprod{\psi_E}{\varphi^{i}_E}\Big)\Big( \sum_{{j}\in J} \bar{\alpha}_{j}\inprod{{j}}{v} \inprod{\varphi^{j}_E}{\psi_E}\Big)\\
&=\Big| \sum_{{i}\in J} \alpha_{i}\inprod{v}{{i}} \inprod{\psi_E}{\varphi^{i}_E}\Big|^2,
\end{align*}
and
\begin{align*}
\tvec{\xi} \rho_{WE}^\mathrm{mix} \ket{\xi} &=\sum_{{i} \in J} |\alpha_{i}|^2  | \inprod{v}{{i}}|^2  |\inprod{\psi_E}{\varphi^{i}_E}|^2 \geq \frac{1}{|J|} \Big| \sum_{i\in J} \alpha_{i}  \inprod{v}{{i}}  \inprod{\psi_E}{\varphi^{i}_E}\Big|^2 =\frac{1}{|J|} \tvec{\xi} \rho_{WE} \ket{\xi},   
\end{align*}
where the inequality follows from Cauchy-Schwarz inequality.
The claim, $\bra{\xi} (|J|\rho_{WE}^\mathrm{mix}-\rho_{WE}) \ket{\xi} \geq 0$, for an {\em arbitrary} $\ket{\xi} = \sum_{w \in \mathcal{W}} \beta_w \ket{w}\ket{\psi_E^w} \in \H_W \kron \H_E$ now follows by linearity, and by noting that $\bra{v,\psi_E} \rho_{WE}\ket{v^\prime,\psi_E^\prime}=0=\bra{v,\psi_E} \rho^\mathrm{mix}_{WE}\ket{v^\prime,\psi_E^\prime}$ for all {\em distinct} $v,v' \in \mathcal{W}$, so that all ``cross-products'' vanish.




\end{proof}

\section{The Tightness of \refthm{quanterr}}
\label{app:tight}
We show here that in general the inequality from \refthm{quanterr} is tight. 
Specifically, we specify a natural class of sampling strategies for which \refthm{quanterr} is an equality. Informally, this class consists of sampling strategies that behave in exactly the same way if the randomized choices $T$ and $S$ are replaced by {\em fixed} choices $t_\circ$ and $s_\circ$, and instead the coordinates of $\vec{q}$ are shuffled by means of a uniformly random permutation (chosen from a subgroup of all permutations). 
The formal definition is given below, but let us point out already here that \refex{withoutreplacement} as well as \ifthenelse{\boolean{includeFifthEx}}{the QKD sampling strategy discussed in \refex{qkdsestrat}}{the sampling strategy described in Appendix~\ref{app:qkdse} (which will play a central role in the proof of quantum key distribution)} belong to this class. 
Indeed, for \refex{withoutreplacement}, instead of choosing a random subset $T$ of size $k$ one can equivalently choose a fixed subset and randomly permute the positions of $\vec{q}$. And, similarly for \ifthenelse{\boolean{includeFifthEx}}{\refex{qkdsestrat}}{the sampling strategy described in Appendix~\ref{app:qkdse}}, instead of choosing left or right from each pair $(q_{i0},q_{i1})$ at random and then choosing a random subset of size $k$ of the selected  $q_{ij}$'s, one can equivalently fix these choices and swap each pair $(q_{i0},q_{i1})$ with probability $\frac12$ and apply a random permutation to the first index. 

Let $S_n$ denote the symmetric group of degree $n$, i.e. the group of permutations on $\setn$. 
For any $\pi \in S_n$ and $\vec{q} = (q_1,\ldots,q_n) \in \mathcal{A}^n$, we write $\pi \vec{q}$ to express that $\pi$ permutes the \emph{positions} of the elements of \vec{q}, i.e., $\pi \vec{q} = (q_{\pi^{-1}(1)},\ldots,q_{\pi^{-1}(n)})$. If $\mathcal{V}$ is a set of strings $\vec{q}\in \mathcal{A}^n$, then $\pi \mathcal{V}$ means that the permutation $\pi$ acts element-wise on $\mathcal{V}$. 

\begin{definition}[$G$-Symmetry of a sampling strategy]
\label{def:gsymm}
Let $\Psi$ be a sampling strategy, let $G$ be a subgroup of $S_n$, where $n$ is the size of the population to which $\Psi$ is applied, and let $\Pi$ be a random permutation, uniformly distributed over $G$. We call $\Psi$ \emph{$G$-symmetric}, if there exist $t_\circ \subset \setn$ and $s_{\circ} \in \mathcal{S}$ such that
\[
\big(\relwt{\vec{ q}_{\bar{T}}},f(T,\vec{q}_{T},S)\big) \sim \big(\relwt{(\Pi\vec{ q})_{\bar{t}_\circ}},f(t_\circ,(\Pi \vec{q})_{t_\circ},s_\circ)\big) 
\]
where ``$\sim$'' means that the pairs have the same probability distribution.
\end{definition}
\noindent
A direct consequence of this definition is the following relation, which we will apply later in this section.
\begin{align*}
B_{T,S}^\delta &=\Set{\vec{q}\in \set{0,1}^n}{ \left| \relwt{\vec{q}\idx{\bar{T}}} -  f(T,\vec{q}\idx{T},S) \right| < \delta } 
\\ &\sim \Set{\vec{q}\in \set{0,1}^n}{ \left| \relwt{(\Pi\vec{q})\idx{\bar{t}_\circ}} -  f(t_\circ,(\Pi \vec{q})\idx{t_\circ},s_\circ) \right| < \delta } = \Pi^{-1}B_{t_\circ,s_\circ}^{\delta} .
\end{align*}
\noindent We can now rephrase \refprop{natural} and prove it.
\setcounter{prop}{0}
\begin{prop}[Rephrased]
For any \emph{$G$-symmetric} sampling strategy $\Psi^\text{sym}_{G}$ and any $\delta>0$:
\[
\qep(\Psi^\text{sym}_{G})= \sqrt{\cep(\Psi^\text{sym}_{G})}
\]
\end{prop}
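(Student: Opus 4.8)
The plan is to prove the non-trivial inequality $\qep \geq \sqrt{\cep}$ (the reverse is exactly \refthm{quanterr}) by exhibiting a single input state that saturates the bound. Fix $\delta > 0$, let $t_\circ,s_\circ$ be the fixed choices supplied by $G$-symmetry (\refdef{gsymm}), and pick $\vec{q}^* \in \set{0,1}^n$ attaining the maximum in the definition of $\cep$, so that $\Pr[\vec{q}^* \notin \Brv] = \cep$. As the witness I would take no side information ($\H_E = \C$) and the uniform superposition over the $G$-orbit $\mathcal{O} := \Set{\pi\vec{q}^*}{\pi \in G}$, namely $\ket{\varphi_A} := N^{-1/2}\sum_{\vec{q}\in\mathcal{O}}\ket{\vec{q}}$ with $N := |\mathcal{O}|$. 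Measuring $A$ in the computational basis then yields a random variable $\vec{Q}$ uniformly distributed over $\mathcal{O}$; since $\mathcal{O}$ is $G$-invariant, $\pi\vec{Q}$ has the same law as $\vec{Q}$ for every $\pi\in G$, and this invariance is the engine of the whole argument.

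Next I would lower-bound $\min_{\tilde{\rho}_{TSAE}}\dist(\rho_{TSAE},\tilde{\rho}_{TSAE})$ for this state. Both $\rho_{TSAE}$ and any ideal state from \refeq{IdealState} are classical on $TS$ with the same marginal $P_{TS}$, and $\ket{\tilde{\varphi}^{ts}_{AE}}\in\linspan(\B)$ (with $\H_E$ trivial), so
\[
\dist(\rho_{TSAE},\tilde{\rho}_{TSAE}) = \sum_{t,s}P_{TS}(t,s)\sqrt{1-|\inprod{\varphi_A}{\tilde{\varphi}^{ts}_{A}}|^2} \geq \sum_{t,s}P_{TS}(t,s)\sqrt{1-\|P_\B\ket{\varphi_A}\|^2},
\]
where $P_\B$ is the projector onto $\linspan(\B)$ and the inequality uses $|\inprod{\varphi_A}{\tilde{\varphi}^{ts}_{A}}| \leq \|P_\B\ket{\varphi_A}\|$ (Cauchy--Schwarz, i.e.\ the Hilbert projection bound quoted after \refthm{quanterr}). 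A one-line computation gives $\|P_\B\ket{\varphi_A}\|^2 = |\mathcal{O}\cap\B|/N = \Pr[\vec{Q}\in\B]$, so writing $p_{t,s} := \Pr[\vec{Q}\notin\B]$ the displayed bound reads $\min_{\tilde{\rho}_{TSAE}}\dist \geq \sum_{t,s}P_{TS}(t,s)\sqrt{p_{t,s}}$, and everything reduces to showing this last sum equals $\sqrt{\cep}$.

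The key step, and the main obstacle, is to prove that $p_{t,s} = \cep$ for \emph{every} $(t,s)$ in the support of $P_{TS}$; note the naive concavity estimate $\sum P_{TS}(t,s)\sqrt{p_{t,s}} \leq \sqrt{\sum P_{TS}(t,s)p_{t,s}}$ runs the wrong way, so the conclusion genuinely hinges on the $p_{t,s}$ being constant. For constancy I would invoke the relation $\Brv \sim \Pi^{-1}B_{t_\circ,s_\circ}^{\delta}$ recorded just before the proposition ($\Pi$ uniform over $G$): equal distributions of random sets have equal supports, so each $\B$ with $P_{TS}(t,s)>0$ equals $\pi^{-1}B_{t_\circ,s_\circ}^{\delta}$ for some $\pi\in G$, whence by $G$-invariance of the law of $\vec{Q}$, $p_{t,s} = \Pr[\vec{Q}\notin\pi^{-1}B_{t_\circ,s_\circ}^{\delta}] = \Pr[\pi\vec{Q}\notin B_{t_\circ,s_\circ}^{\delta}] = \Pr[\vec{Q}\notin B_{t_\circ,s_\circ}^{\delta}]$, independent of $(t,s)$. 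For the value, I would note that each orbit element is again a classical-error maximizer: for $\sigma\in G$, using that $\Pi\sigma$ is uniform over $G$ whenever $\Pi$ is, $\Pr[\sigma\vec{q}^*\notin\Brv] = \Pr[\Pi\sigma\vec{q}^*\notin B_{t_\circ,s_\circ}^{\delta}] = \Pr[\Pi\vec{q}^*\notin B_{t_\circ,s_\circ}^{\delta}] = \Pr[\vec{q}^*\notin\Brv] = \cep$; averaging over $\vec{q}\in\mathcal{O}$ then gives $\sum_{t,s}P_{TS}(t,s)\,p_{t,s} = \Pr[\vec{Q}\notin\Brv] = \cep$, and combined with constancy this forces $p_{t,s}\equiv\cep$. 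Plugging back, $\min_{\tilde{\rho}_{TSAE}}\dist \geq \sqrt{\cep}\sum_{t,s}P_{TS}(t,s) = \sqrt{\cep}$, so $\qep\geq\sqrt{\cep}$, and together with \refthm{quanterr} we conclude $\qep = \sqrt{\cep}$.
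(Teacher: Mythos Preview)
Your proof is correct and follows essentially the same strategy as the paper's: take the uniform superposition over the $G$-orbit of a classical-error maximizer $\vec{q}^*$, and use $G$-symmetry to show that the per-$(t,s)$ error is constant (your $p_{t,s}\equiv\cep$, the paper's ``$|H_{t,s}|$ independent of $(t,s)$'') and equal to $\cep$. The only differences are cosmetic: you sum over the orbit $\mathcal{O}$ rather than over $G$ (which sidesteps a normalization subtlety when the stabilizer of $\vec{q}^*$ is non-trivial), and you lower-bound $\min_{\tilde{\rho}}\dist$ directly via Cauchy--Schwarz rather than appealing to the Hilbert projection theorem to identify the optimal $\tilde{\rho}_{TSAE}$.
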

\begin{proof}
We need to show that there exists a system $E$ and a state $\ket{\varphi_{AE}}$ such that $\dist\bigl(\rho_{TSAE},\tilde{\rho}_{TSAE}\bigr)^2 = \cep$ for $\tilde{\rho}_{TSAE}$ that minimizes the left hand side. 
As pointed out after the proof of \refthm{quanterr}, the particular construction of $\tilde{\rho}_{TSAE}$ used in the proof of \refthm{quanterr} does minimize $\dist\bigl(\rho_{TSAE},\tilde{\rho}_{TSAE}\bigr)$. Hence, it suffices to show that there exists a system $E$ and a state $\ket{\varphi_{AE}}$ (that depends on $G$) such that
\phantomsection\label{linktarget}
\begin{align*}
\dist\bigl(\rho_{TSAE},\tilde{\rho}_{TSAE}\bigr)^2 \stackrel{\addtocounter{equation}{1}\text{\scriptsize(\theequation)}}{=}  \left[\sum_{t,s} P_{TS}(t,s)|\inprod{\varphi_{AE}}{\tilde{\varphi}^{ts\perp}_{AE}}|\right]^2 
\stackrel{\addtocounter{equation}{1}\text{\scriptsize (\theequation)}}{=}  \sum_{t,s} P_{TS}(t,s)|\inprod{\varphi_{AE}}{\tilde{\varphi}^{ts\perp}_{AE}}|^2
\stackrel{\addtocounter{equation}{1}\text{\scriptsize (\theequation)}}{=}  \cep.
\end{align*}
where $\tilde{\rho}_{TSAE}$ and $\ket{\tilde{\varphi}^{ts\perp}_{AE}}$ are constructed as in the proof of \refthm{quanterr}. 
The derivation of equality (\addtocounter{equation}{-2}\hyperref[linktarget]{\theequation}) can be found in the proof of \refthm{quanterr}. The outline of the remaining part of the proof is as follows; we first present a candidate for \ket{\varphi_{AE}} and then we show that equalities (\addtocounter{equation}{1}\hyperref[linktarget]{\theequation}) and (\addtocounter{equation}{1}\hyperref[linktarget]{\theequation}) do indeed hold for this state. 

We choose $E$ to be empty. Furthermore, we
define
\begin{equation*}
\label{eq:symm}
\ket{\varphi_{AE}}:=\frac{1}{\sqrt{|G|}}\sum_{\pi \in G} \ket{\pi\vec{q}^*}. 
\end{equation*}
where $\vec{q}^* $ is such that $\Pr[\vec{q}^* \notin \Brv]  = \cep$. 
It follows from the projection construction for $\tilde{\rho}_{TSAE}$ that
\[
\ket{\tilde{\varphi}_{AE}^{ts\perp} }= \frac{1}{\sqrt{|H_{t,s}|}}\sum_{\pi \in H_{t,s}} \ket{\pi\vec{q}^*},
\]
where $H_{t,s}\subseteq G$, i.e. $
H_{t,s} :=  \set{\pi \in G: \pi\vec{q}^*\notin \B }$.  

To prove equality (\addtocounter{equation}{-1}\hyperref[linktarget]{\theequation}), we need to show that the inner product $|\inprod{\varphi_{AE}}{\tilde{\varphi}^{ts\perp}_{AE}}|$ is independent of $t$ and $s$. Because \ket{\varphi_{AE}} is a uniform superposition over permutations of $\vec{q}^*$ and \ket{\tilde{\varphi}^{ts\perp}_{AE}} is a renormalized projection of  \ket{\varphi_{AE}}, we can easily compute this inner product, $|\inprod{\varphi_{AE}}{\tilde{\varphi}^{ts\perp}_{AE}}| =|H_{t,s}| / \sqrt{|G|\cdot |H_{t,s}|} =  \sqrt{|H_{t,s}|/|G|}$. It suffices to show that $|H_{t,s}|$ is independent of $(t,s)$. It follows from the $G$-symmetry that there exists a $\pi$ such that $\B = \pi B^\delta_{t_\circ,s_\circ}$. Furthermore, let $\Pi$ be a random permutation, uniformly distributed over $G$. By definition of $H_{t,s}$ and because $\Pi$ is \emph{uniformly} distributed over $G$, we may write 
\addtocounter{equation}{1}
\begin{equation}
|H_{t,s}|=|G|\cdot \Pr [\Pi \,\vec{q}^* \notin \B] = |G|\cdot \Pr [\vec{q}^* \notin \Pi^{-1} \pi B^\delta_{t_\circ,s_\circ}] = |G|\cdot \Pr [\vec{q}^* \notin \Pi^{-1} B^\delta_{t_\circ,s_\circ}] ,
\label{eq:prf1}
\end{equation}
where the last expression is clearly independent of $(t,s)$.

Now, let us focus on equality (\addtocounter{equation}{-1}\hyperref[linktarget]{\theequation}). We derived in the proof of \refthm{quanterr} that 
$
\sum_{t,s} P_{TS}(t,s) \, |\inprod{\varphi_{AE}}{\tilde{\varphi}^{ts\perp}_{AE}}|^2 = \sum_{\vec{q}} P_{\vec{Q}}(\vec{q})\, \Pr\bigl[\vec{q} \!\notin\! B_{T,S}^{\delta}\bigr]
$, 
where the random variable $\vec{Q}$ is obtained by measuring subsystem $A$ of \ket{\varphi_{AE}}. By definition of $\ket{\varphi_{AE}}$, $P_{\vec{Q}}(\vec{q})>0$ only for $\vec{q}$ of the form $\pi \vec{q}^*$ for some $\pi \in G$. Hence, to prove equality (\hyperref[linktarget]{\theequation}\addtocounter{equation}{1}), we have to show that for any $\pi \in G$, $\Pr[\pi \vec{q}^* \notin \Brv]  = \cep$. This follows directly from the $G$-symmetry,
\begin{equation}
\Pr[\pi \vec{q}^*\!\notin\!\Brv] = \Pr[\pi\vec{q}^*\!\notin\!\Pi^{-1} B_{t_\circ,s_\circ}^\delta] = \Pr[\vec{q}^*\!\notin\! \pi^{-1} \Pi^{-1} B_{t_\circ,s_\circ}^\delta]= \Pr[\vec{q}^*\!\notin\!\Pi^{-1}B_{t_\circ,s_\circ}^\delta ] = \Pr[\vec{q}^*\!\notin\!\Brv ].
\label{eq:prf2}
\end{equation}
Finally, note that \refeq{prf1} and \refeq{prf2} rely on the group structure of $G$.
\end{proof}

\end{appendix}

\end{document}